\newcounter{function}
\newcounter{algorithm saved}
\newenvironment{function}[1][htb]{%
    \renewcommand{\ALG@name}{Func.}
    \setcounter{algorithm saved}{\value{algorithm}} 
    \setcounter{algorithm}{\value{function}}
    \begin{algorithm}[#1]%
    }{\end{algorithm}
    \setcounter{function}{\value{algorithm}}
    \setcounter{algorithm}{\value{algorithm saved}}
}
\newtheorem{theorem}{Theorem}[section]
\begin{document}

\title{Efficient Dropout-resilient Aggregation for Privacy-preserving Machine Learning}

\author{Ziyao Liu\textsuperscript{\textsection}, Jiale Guo\textsuperscript{\textsection}, Kwok-Yan Lam~\IEEEmembership{Senior Member,~IEEE,}, and~Jun~Zhao~\IEEEmembership{Member,~IEEE}
\IEEEcompsocitemizethanks{\IEEEcompsocthanksitem Ziyao Liu, Jiale Guo, Kwok-Yan Lam, and Jun Zhao are with the School of Computer Science and Engineering, Nanyang Technological University, Singapore, 50 Nanyang Ave, 639798.\protect ~E-mail: \{ziyao002, jiale001\}@e.ntu.edu.sg, \{kwokyan.lam, junzhao\}@ntu.edu.sg. Corresponding author: Jun Zhao}
\thanks{This research is supported in part by the National Research Foundation, Singapore under its Strategic Capability Research Centres Funding Initiative. Any opinions, findings and conclusions or recommendations expressed in this material are those of the author(s) and do not reflect the views of National Research Foundation, Singapore. This research is also supported by in part by Nanyang Technological University (NTU)
Startup Grant, and Singapore Ministry of Education Academic Research Fund under Grant Tier 2 MOE2019-T2-1-176.}

\thanks{Manuscript received September 02, 2021; revised January 9, 2022 and February 18, 2022; accepted March 14, 2022. Date of publication TBD, 2022. }}

\markboth{IEEE TRANSACTIONS ON INFORMATION FORENSICS AND SECURITY,~Vol.~TBD, No.~TBD, TBD~2022}%
{Ziyao \MakeLowercase{\textit{et al.}}: Efficient Dropout-resilient Aggregation for Privacy-preserving Machine Learning}

\IEEEpubid{\begin{minipage}[t]{\textwidth}\ \\
 \centering Copyright \copyright 2022 IEEE. Personal use of this material is permitted. Permission from IEEE must be obtained for all other uses, in any current or future media, including reprinting/republishing this material for advertising or promotional purposes, creating new collective works, for resale or redistribution to servers or lists, or reuse of any copyrighted component of this work in other works.\end{minipage}}

\maketitle
\begingroup\renewcommand\thefootnote{\textsection}
\footnotetext{Equal contribution}
\endgroup

\begin{abstract}
Machine learning (ML) has been widely recognized as an enabler of the global trend of digital transformation. With the increasing adoption of data-hungry machine learning algorithms, personal data privacy has emerged as one of the key concerns that could hinder the success of digital transformation. As such, Privacy-Preserving Machine Learning (PPML) has received much attention of the machine learning community, from academic researchers to industry practitioners to government regulators. However, organizations are faced with the dilemma that, on the one hand, they are encouraged to share data to enhance ML performance, but on the other hand, they could potentially be breaching the relevant data privacy regulations. Practical PPML typically allows multiple participants to individually train their ML models, which are then aggregated to construct a global model in a privacy-preserving manner, e.g., based on multi-party computation or homomorphic encryption. Nevertheless, in most important applications of large-scale PPML, e.g., by aggregating clients' gradients to update a global model for federated learning, such as consumer behavior modeling of mobile application services, some participants are inevitably resource-constrained mobile devices, which may drop out of the PPML system due to their mobility nature \cite{yang2019federated}. Therefore, the resilience of privacy-preserving aggregation has become an important problem to be tackled because of its real-world application potential and impacts. In this paper, we propose a scalable privacy-preserving aggregation scheme that can tolerate dropout by participants at any time, and is secure against both semi-honest and active malicious adversaries by setting proper system parameters. By replacing communication-intensive building blocks with a seed homomorphic pseudo-random generator, and relying on the additive homomorphic property of Shamir secret sharing scheme, our scheme outperforms state-of-the-art schemes by up to 6.37$\times$ in runtime and provides a stronger dropout-resilience. The simplicity of our scheme makes it attractive both for implementation and for further improvements.
\end{abstract}

\begin{IEEEkeywords}
Secure aggregation, privacy-preserving machine learning, dropout-resilience, HPRG.
\end{IEEEkeywords}

\section{Introduction}
\IEEEPARstart{W}{ith} the widespread adoption of data-hungry machine learning algorithms and increasing concerns for personal data privacy protection, Privacy-Preserving Machine Learning (PPML) has emerged as an important area that received much attention of the machine learning community, from academic researchers to industry practitioners to government regulators \cite{yang2021privacy,li2021federated,nguyen2021flguard,hesamifard2018privacy}. PPML allows multiple participants, e.g., data owners, to jointly solve a machine learning problem while preserving their data privacy. Traditional PPML typically enables data owners to individually perform their ML to train a model using their local data, i.e., compute the gradients, which are then aggregated to construct a global model. However, as pointed out in \cite{zhu2019deep}, local data of an individual participant could be revealed through a small portion of gradients of its local model. This severe leakage from gradients even allows the attacker to recover images with pixel-wise accuracy and texts with token-wise matching. In this connection, cryptographic mechanisms such as secure multi-party computation (MPC) \cite{goldreich1998secure,byali2020flash} and homomorphic encryption (HE) \cite{gentry2009fully,brutzkus2019low} have been proposed to enhance PPML by aggregating the local models in a privacy-preserving manner. 

\IEEEpubidadjcol
In general, one can enhance PPML by constructing a secure\footnote{We use the terms secure and privacy-preserving interchangeably.} aggregation scheme that protects the local models by privacy-preserving technology (PPT) such as differential privacy (DP) \cite{dwork2014algorithmic,yang2020local} and the aforementioned cryptographic mechanisms (i.e., MPC and HE). For example, DP can be applied to clients' gradients before they are uploaded to the aggregation server \cite{zhao2020local}. In this way, the privacy of gradient information is protected while the server can still aggregate the perturbed gradients to obtain an approximate result according to the properties of DP. Nevertheless, this method suffers from the trade-off between privacy protection and data usability, hence model performance. Whereas, for HE-based aggregation schemes such as \cite{chan2012privacy,aono2017privacy,sinem21poseidon}, participants perform compute-intensive algorithms to encrypt their gradients and send to the server. Then the server aggregates the protected gradients by performing arithmetic operations on the ciphertext and sends them back to the participants for decryption or continuing the training over ciphertext. For example, to achieve a similar training accuracy for CIFAR10 image classification, it requires 175 hours in POSEIDON \cite{sinem21poseidon}, but only about 1.67 hours using our scheme with similar neural network architecture and server configuration.

With the rapid growth of the digital economy involving billions of users in Cyberspace, to ensure business sustainability in a highly competitive environment, most application service providers have adopted large-scale PPML to model user behavior and preferences in order to provide improved user experiences. It is worth noting that, a large number of users are likely to be on mobile devices, so PPML needs to be highly resilient and cope with the dynamic connectivity of mobile devices. For example, Gboard on Android (the Google Keyboard) \cite{yang2018applied} has deployed a PPML on mobile phones to improve the accuracy of suggested queries for the user's current typing context. Furthermore, researchers from Apple have used PPML deployed on iPhones to enhance the performance of speaker verification \cite{granqvist2020improving}. In this case, large-scale PPML inevitably involves resource-constrained mobile devices which may drop out of the system due to their mobility nature.

In this connection, dropout-resilient privacy-preserving aggregation, e.g., by aggregating clients' gradients to update a global machine learning model in a privacy-preserving manner for Federated Learning \cite{kairouz2019advances}, has attracted tremendous attention of the research community because of its real-world application potential and impacts. For example, \cite{chan2012privacy} proposed a scheme, based on threshold HE cryptosystem, to securely share the private key and to deal with dropout clients. However, this scheme requires all the clients to contribute to several expensive building blocks such as distributed key generation and decryption \cite{hazay2019efficient}, which is not practical for large-scale PPML. Compared with \cite{chan2012privacy}, the scheme in \cite{aono2017privacy} is more efficient by having the secret key held by every participant, though it sacrifices the privacy requirements of PPML that the privacy of gradient must be kept by the corresponding participant. On the other hand, a pure MPC-based aggregation scheme is not suitable for a large-scale PPML due to the huge communication overheads when evaluating complex functions such as a deep neural network (DNN), and the problem is further exacerbated by the fact that a lot of the PPML clients are resource-constrained mobile devices. For example, training a simple CNN one epoch requires about 7 hours in WAN setting \cite{liu2020mpc}. Server-aided MPC-based schemes such as \cite{mohassel2018aby3,wagh2020falcon} achieve good efficiency relying on a set of non-colluding servers. However, for PPML systems that are deployed by a single agent such as a commercial company or government, while the trust distribution is limited to the number of aided servers, it is not easy to guarantee that the involved aided servers are non-colluding from a game-theoretical perspective.

Recent advancements in dropout-resilient secure aggregation significantly improved the protocol efficiency by leveraging on pair-wise additive masking and Shamir secret sharing as proposed in the pioneering work SecAgg \cite{bonawitz2017practical}.  
Assume there is a set of client $\mathcal{U}$, and let each client $u_i \in \mathcal{U}$ holds a vector $\boldsymbol{x}_i$, the goal is to calculate $\sum \boldsymbol{x}_i$ while preserving the privacy of $\boldsymbol{x}_i$. In \cite{bonawitz2017practical}, a pair-wise additive mask is added to $\boldsymbol{x}_i$ (assume a total order on clients) and the client $u_i$ uploads $\boldsymbol{y}_{i}$ to the central server rather than $\boldsymbol{x}_i$:
$$\boldsymbol{y}_{i}=\boldsymbol{x}_{i}+\sum_{i<j} \text{PRG}(s_{i, j})-\sum_{i>j} \text{PRG}(s_{j, i})$$
where PRG denotes a pseudorandom generator that is able to generate a sequence of random numbers using the seed $s_{i, j}$. It is obvious from the equation that, when aggregating all the $\boldsymbol{x}_i$, the masks will be canceled such that

\begin{footnotesize}
\begin{equation*}
\begin{aligned}
\sum_{u_i\in \mathcal{U}}\boldsymbol{y}_{i}=\sum_{u_i\in \mathcal{U}}\left(\boldsymbol{x}_{i}+\sum_{i<j} \text{PRG}(s_{i, j})-\sum_{i>j} \text{PRG}(s_{j, i})\right)=\sum_{u_i\in \mathcal{U}} \boldsymbol{x}_{i}    
\end{aligned}
\end{equation*}
\end{footnotesize}

Furthermore, the seeds are shared among the clients using the standard $t$-out-of-$n$ Shamir secret sharing scheme (see Section \ref{sec:Shamir-secret-sharing}) to handle the dropout clients. To generate the pair-wise masks, each pair of client $(u_i,u_j)$ are involved with Diffie-Hellman (DH) based key exchange protocol \cite{diffie2019new} to make an agreement on the seed $s_{i, j}$. Note that, with an $n$-clients PPML system, running a pair-wise DH protocol is not inexpensive as it has $\mathcal{O}(n^2)$ communication-round complexity.  The follow-up works such as \cite{mandal2019privfl} and \cite{guo2021privacy} also show its inefficiency. Besides, the runtime of the whole protocol increases rapidly as the fraction of dropped participants (dropout rate) increases, as shown in Fig. \ref{fig:my_label}. In this case, active malicious dropping which slows down the system could even be exploited as a kind of attack.
\begin{figure}[htbp]
    \centering
    \includegraphics[width=0.7\linewidth]{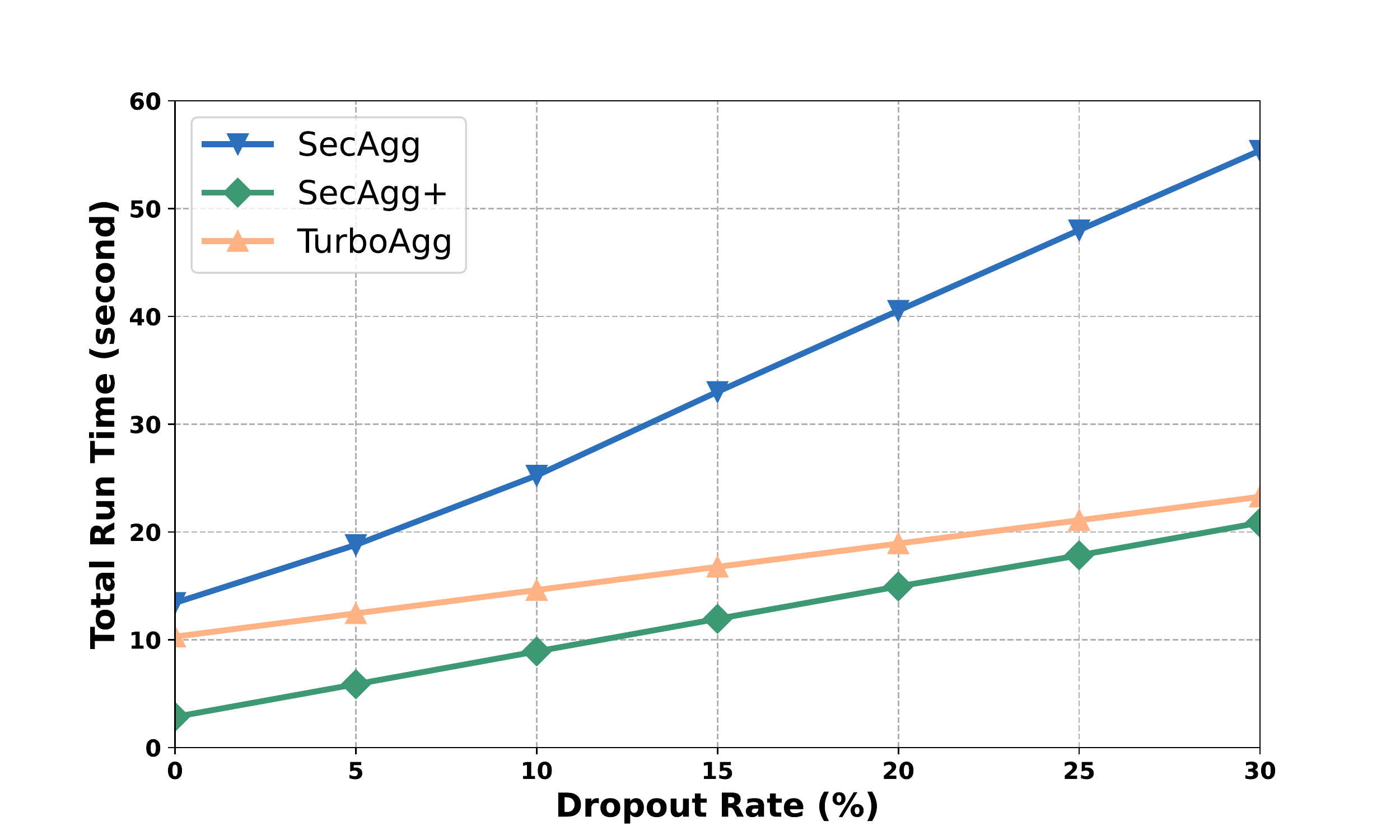}
    \caption{The runtime of the state-of-the-art aggregation protocols including TurboAgg \cite{so2021turbo}, SecAgg \cite{bonawitz2017practical} and SecAgg+ \cite{bell2020secure} with the increase of the dropout rate.}
    \label{fig:my_label}
\end{figure}

To improve the efficiency of the SecAgg scheme, there are three possible directions proposed by previous works: 
\begin{enumerate}[-]
    \item Communicate across only a subset of clients: For example, the variant TurboAgg \cite{so2021turbo} divides clients into $l$ groups with $n_l$ clients of each group and follows a multi-group circular structure. Both CCESA scheme \cite{choi2020communication} and SecAgg+ \cite{bell2020secure} replace the star topology of the communication network in SecAgg with random subgroups of clients, i.e., a $k$-regular graph such as Erdos-Renyi graph and Harray graph, and thus reduce the communication rounds between clients.
    \item Optimize the computational complexity: For example, FastSecAgg \cite{kadhe2020fastsecagg} substitutes standard Shamir secret sharing with a more efficient FFT-based multi-secret sharing scheme, thus reducing the computational cost of both the server and clients.
    \item Replace pair-wise DH protocol with other schemes: For example, NIKE \cite{mandal2018nike} adopts a non-interactive key exchange protocol, thus improving the efficiency of the generation of cryptographic materials for executing security protocols.
    \item Compress the gradient vector: For example, SAFER \cite{beguier2020efficient} proposes an MPC-friendly coding method to compress the gradient vector, and thus to reduce the communication cost between the clients and server.
\end{enumerate}

We note that TurboAgg and FastSecAgg sacrifice some security for efficiency purposes. NIKE involves two non-colluding servers to construct a $2$-out-of-$3$ Shamir secret sharing scheme, which is not suitable for PPML applications that rely on a non-collusion assumption. SAFER does not support large-scale systems and has no dropout-resilience. In addition, a more generic scheme SAFELearn \cite{fereidooni2021safelearn} can be instantiated by FHE or MPC to construct efficient private FL systems. However, as demonstrated in Section IV.B of SAFELearn \cite{fereidooni2021safelearn}, its FHE instantiation is computationally expensive, and thus is not friendly for PPML systems that involve resource-constrained mobile devices. The more practical MPC-based SAFELearn still relies on non-colluding servers, which limits its application scenarios to that with a non-collusion assumption. Therefore, considering the computation-friendly schemes that involve only one single server while providing both dropout-resilience and security against active malicious adversaries, we choose SecAgg as the baseline and SecAgg+ as state-of-the-art for a fair comparison. Besides, We summarize the performance of the aggregation schemes that do not rely on non-collusion assumptions, i.e., do not involve non-colluding servers, in Table \ref{tab:comp_table}.

\begin{table*}
\scriptsize
\setlength\tabcolsep{3pt}
\caption{Comparison of the computation complexity, communication complexity, dropout resilience and privacy guarantee between SecAgg \cite{bonawitz2017practical}, TurboAgg \cite{so2021turbo}, CCESA \cite{choi2020communication}, SecAgg+ \cite{bell2020secure} and FastSecAgg \cite{kadhe2020fastsecagg}. Here $n$ is the total number of clients, and $m$ is the length of each client's vector. In TurboAgg, $n_l$ is set to be $\log n$ as the group size for the best trade-off. $k$ is set to be $\mathcal{O}(\log{n})$ as the degree of each client in $k$-regular graph in SecAgg and $\mathcal{O}(\sqrt{n/\log{n}})$ in CCESA, $\delta$ is the dropout rate. In FastSecAgg, $d$ can be set up to $\frac{n}{2}$. Note that if the protocol listed in the table provides security against malicious adversaries, its security is also guaranteed against semi-honest adversaries. Since the maximum number of dropout clients is not the same for different privacy guarantees, we only give the comparison in a semi-honest setting. Note that some of complexity analysis are extracted from SAFELearn paper \cite{fereidooni2021safelearn}, and we assume $m$ is greater than $n$ for many real-world scenarios.}
\centering
\begin{tabular}{c|c|cccccc}
\toprule[1pt]
\multicolumn{2}{c|}{Protocol}                                                 & SecAgg \cite{bonawitz2017practical}& TurboAgg \cite{so2021turbo} & CCESA \cite{choi2020communication} & SecAgg+ \cite{bell2020secure} & FastSecAgg \cite{kadhe2020fastsecagg} & Our's \\ 
\midrule
\multirow{2}{*}{\begin{tabular}[c]{@{}c@{}}Computation\\ complexity\end{tabular}}&Server& $\mathcal{O}(mn^2)$   &$\mathcal{O}(m\log n \log^2 \log n)$ & $\mathcal{O}(mn\log n)$ &$\mathcal{O}(mn\log n +n\log^2 n)$&$\mathcal{O}(m\log n)$&$\bm{\mathcal{O}(n)}$\\ 
\specialrule{0em}{1pt}{0pt}
&Client& $\mathcal{O}(n^2+mn)$ &$\mathcal{O}(m\log n \log^2 \log n)$& $\mathcal{O}(n\sqrt{n\log n}+mn)$  &$\mathcal{O}(m\log n+\log^2 n)$&$\mathcal{O}(m\log n)$&$\bm{\mathcal{O}(n^2+m)}$\\ 
\midrule
\specialrule{0em}{1pt}{0pt}
\multirow{2}{*}{\begin{tabular}[c]{@{}c@{}}Communication\\ complexity\end{tabular}}&Server&$\mathcal{O}(n^2+mn)$&$\mathcal{O}(mn\log n)$&$\bm{\mathcal{O}(n\log n+m\sqrt{n\log n})}$&$\mathcal{O}(mn+n\log n)$&$\mathcal{O}(n^2+mn)$&$\mathcal{O}(n^2+mn)$\\ 
\specialrule{0em}{1pt}{0pt}
&Client&$\mathcal{O}(m+n)$&$\mathcal{O}(m\log n)$&$\mathcal{O}(\sqrt{n\log n+m})$&$\bm{\mathcal{O}(m+\log n)}$&$\mathcal{O}(m+n)$&$\mathcal{O}(m+n)$\\ 
\midrule
\specialrule{0em}{1pt}{0pt}
\multirow{2}{*}{\begin{tabular}[c]{@{}c@{}}Dropout\\ resilience\end{tabular}} &Scheme&$(t, n)$&$(\frac{n_l}{2}, n_l)$&$(t, k)$&$(t, k)$&$(n-d, n)$&$(t, n)$\\ 
\specialrule{0em}{1pt}{0pt}
&Max drop&$n-1$&$\frac{n}{2}-1$&$\delta n$&$\delta n$&$\frac{n}{2}-1$&$n-1$\\ 
\midrule
\specialrule{0em}{1pt}{0pt}
\multicolumn{2}{c|}{Communication rounds}&4&$n/\log n$&3&3&3&3\\ 
\midrule
\specialrule{0em}{1pt}{0pt}
\multicolumn{2}{c|}{Privacy}&malicious&semi-honest&semi-honest&malicious&semi-honest&malicious\\ 
\bottomrule[1pt]
\end{tabular}
\label{tab:comp_table}
\end{table*}

In this paper, leveraging on homomorphic pseudorandom generator (HPRG) and Shamir secret sharing scheme, we propose an efficient and dropout-resilient aggregation scheme for PPML. In general, the improved efficiency and dropout-resilience mainly come from the application of HPRG and the sophisticated integration with Shamir secret sharing and the building blocks of SecAgg-based schemes. Specifically, we replace the Diffie-Hellman key exchange protocol for seed agreement in existing schemes with an HPRG-based scheme. In this case, no interaction is needed for the clients to construct the seed used for generating masks, which significantly reduces the communication overheads compared to previous works. Furthermore, since the connected clients do not need to send additional Shamir shares to the server to reconstruct the dropped users' secrets, compared to SecAgg-based schemes, less interaction over clients leads to lower computation and communication overheads. Meanwhile, the server needs only one computation for seed reconstruction rather than the number of dropout clients in SecAgg-based schemes. Thus, our proposed scheme has a stronger dropout-resilience that the runtime of the whole protocol decreases with the increase of the dropout rate.

Compared with existing works, our contributions are as follows:

\begin{enumerate}[-]
    \item The proposed scheme is more efficient, which significantly reduces the communication and computation overheads.
    \item The proposed scheme has a stronger dropout-resilience that the runtime of the whole protocol decreases with the increase of dropout rate. At the same time, the $t$-out-of-$n$ Shamir secret sharing scheme is kept, and no extra trusted third party is needed.
    \item The simplicity of the proposed scheme makes it attractive both for implementation and for further improvements.
\end{enumerate}

Moreover, it can be proven that our scheme is secure against both semi-honest and active malicious adversaries by setting proper system parameters even if a set of clients drops out of the protocol at any time.

\textbf{Organisation of the paper:} The rest of the paper is organized as follows. In Section \ref{sec:preliminaries}, we review the underlying cryptographic primitives and define the notations used by our proposed scheme. Then we proceed to our proposed protocol in Section \ref{sec:high-level overview}, followed by the security analysis in Section \ref{sec:sec-analysis}, performance analysis and discussions in Section \ref{sec:performance-analysis}. Finally, we give the conclusions in Section \ref{sec:conclusion}.

\section{Underlying Cryptographic Primitives} 
\label{sec:preliminaries}
This section briefly describes the preliminaries of Shamir secret sharing scheme, homomorphic pseudorandom generator, and signature schemes.

\subsection{Shamir Secret Sharing}
\label{sec:Shamir-secret-sharing}
Shamir secret sharing scheme \cite{shamir1979share} divides a secret $S$ into $n$ pieces of data called shares such that (i) the secret $S$ can be efficiently reconstructed by any combination of $t$ data pieces, and (ii) the secret $S$ cannot be reconstructed by any set of data pieces of which the number is less than $t$. Such scheme is called a $t$-out-of-$n$ or ($t$, $n$) threshold scheme.

In specific, for a standard ($t$, $n$) Shamir secret sharing scheme, the secret $S$ and shares $S_1, \dots, S_n$ are the elements in a finite field $\mathbb{Z}_P$ for some prime $P$ where $0<t\leq n<P$. We assume that there is one secret holder $u_s$ and $n$ participants $\{u_1, \dots, u_n\}$. The scheme works as follows:

\begin{enumerate}
    \item Preparation: The secret holder $u_s$ randomly chooses $t-1$ positive integers $a_1, \dots, a_{t-1}$ from $\mathbb{Z}_P$ and $a_0=S$ to define a polynomial of degree $t-1$, i.e., $f(x)=a_0+a_1x+a_2x^2+a_3x^3+\dots+a_{t-1}x^{t-1} \bmod{P}$.
    \item Secret sharing: The secret holder $u_s$ randomly chooses $n$ points $x_1, \dots, x_n$ to retrieve $\{x_i, f(x_i)\}$ for $i \in \{1,2,\dots,n\}$, and sends them to the corresponding participants $u_i$.
    \item Secret reconstructing: Given any $t$ of $\{x_i, f(x_i)\}$, the secret holder is able to calculate the coefficients $a_0, \dots, a_{t-1}$ of the polynomial $f(x)$ using Lagrange interpolation, and the constant term $a_0$ is the secret. A more efficient method to directly reconstruct the secret is to calculate $S=a_0=\sum_{j=0}^{t} f\left(x_{j}\right) \prod_{m=0, m \neq j}^{t} \frac{x_{m}}{x_{m}-x_{j}} \bmod{P}$.
\end{enumerate}

For simplicity, we denote $\mathcal{F}_{gen}$ as the share generation function to generate shares $\{x_i, f(x_i)\}$, and $\mathcal{F}_{rec}$ as the secret reconstruction function to reconstruct the secret $S$ for $(t,n)$ standard Shamir secret sharing.

Shamir secret sharing scheme is additive homomorphic \cite{goldreich1998secure}. For example, assume that the party $P_1$ has a secret $S_1$. To secretly share $S_1$ among $n$ parties $P_1, P_2, \dots, P_n$, the secret holder $P_1$ chooses a $t-1$ degree polynomial $f(x)=a_0+a_1x+\dots+a_{t-1}x^{t-1} \bmod{P}$ where $a_0=S_1$ and $a_1, \dots, a_{t-1}$ are randomly chosen from $\mathbb{Z}_P$, then computes each Shamir share of $S_1$ to be sent to $P_i$ as $\{x_i,f(x_i)\}$ for $x_i\in X=\{x_1, x_2, \dots, x_n\}$. Similarly, another secret holder $P_2$ who has the secret $S_2$ chooses a $t-1$ degree polynomial $g(x)=b_0+b_1x+\dots+b_{t-1}x^{t-1}$, where $b_0=S_2$ and $b_1, \dots, b_{t-1}$ are randomly chosen from $\mathbb{Z}_P$, then computes $n$ shares of $S_2$ as $\{x_i,g(x_i)\}$ for $x_i\in X$ that $\{x_i,g(x_i)\}$ is sent to $P_i$. In this case, each party $P_i$ can locally compute $\{x_i,f(x_i)+g(x_i)\}$, and then cooperates with other parties of which the number is greater than $t$ to reconstruct the secret $S_1+S_2$ using Lagrange interpolation over $\{x_i,f(x_i)+g(x_i)\}$ for $x_i \in X$. We note that addition over Shamir shares works only when $P_1$ and $P_2$ agree on the same set $X$, which is usually assigned to be $\{1,2, \dots, n\}$ in Shamir based applications \cite{bonawitz2017practical, bell2020secure, so2021turbo}. Besides, when context is clear, we abuse $f(x_i)$ to denote the share rather than $\{x_i, f(x_i)\}$.

\subsection{Homomorphic Pseudorandom Generator}
\label{sec:homo-prg}

A pseudorandom function (PRF) is an efficient algorithm that approximately maps two distinct sets $F: \mathcal{K}\times\mathcal{X}\rightarrow\mathcal{Y}$ such that a uniform $k \in \mathcal{K}$, a uniform function $f:\mathcal{X}\rightarrow\mathcal{Y}$, an oracle for $F(k, \cdot)$ is computationally indistinguishable from an oracle for $f(\cdot)$ \cite{goldreich2019construct}. Similar to the definition of PRF, a pseudorandom generator (PRG) is an efficient algorithm that is able to generate a sequence of approximate random numbers. More specifically, PRG is an efficient computable function $G:\mathcal{S}\rightarrow\mathcal{Y}$ such that for uniform $s \in \mathcal{S}$ and uniform $y \in \mathcal{Y}$, the distribution $\{G(s)\}$ is computationally indistinguishable from the distribution of $\{y\}$.

A pseudorandom function $F: \mathcal{K}\times\mathcal{X}\rightarrow\mathcal{Y}$ is said to be key homomorphic if for any $F(k_1, x)$ and $F(k_2, x)$, an efficient algorithm exists to compute $F(k_1 \oplus k_2, x)=F(k_1, x) \otimes F(k_2, x)$ where both $(\mathcal{K, \oplus})$ and $(\mathcal{Y, \otimes})$ are groups. In simple words, the PRF is homomorphic with respect to its key. Similarly, a PRG function $G: \mathcal{S}\rightarrow\mathcal{Y}$ is said to be seed homomorphic if for any $G(s_1, x)$ and $G(s_2, x)$, we have $G(s_1 \oplus s_2, x)=G(s_1, x) \otimes G(s_2, x)$. Such type of PRG is called homomorphic PRG (HPRG).

As described in \cite{banerjee2015key}, constructing an HPRG is quite straightforward in random oracle model. Let $(\mathbb{G},q)$ be a finite cyclic group of prime order $q$, and $H: \mathcal{X}\rightarrow \mathbb{G}$ be a hash function modeled as random oracle, and the function $F: \mathbb{Z}_q \times \mathcal{X}\ \rightarrow \mathbb{G}$ be $$F(k,x)=H(x)^k$$
We can observe that $F$ is key homomorphic:
$$F(k_1+k_2,x)=F(k_1,x)\cdot F(k_2,x)$$
It is proved in \cite{naor1999distributed} that if the Decision Diffie-Hellman (DDH) assumption holds in $\mathbb{G}$, such function $F$ is secure in random oracle model that if $k$ is uniform in $\mathcal{K}$, then $F(k,\cdot)$ is indistinguishable from a random sample in $\mathbb{G}$. Therefore, a simple version of HPRG can be constructed by using the seed $s$ for $k$ and indices for $x$. In specific, HPRG $G(s)$ can generate a sequence of elements, i.e.,  $F(s,1),F(s,2),F(s,3)...$. The length of such sequence is determined by the vector to be masked. For example, if the input vector $\boldsymbol{x}=[x_1,x_2,...,x_m]$ has $m$ entries, then $G(s)$ generates $F(s,1),F(s,2),...,F(s,m)$, and the masking operations are done element-wisely.

Although there are some other more robust HPRG in the standard model, such as \cite{boneh2013key} and \cite{banerjee2014new}, they rely on learning with errors assumptions, thus are not practical for their inefficient parameter size and runtime. In this work, we choose DDH based HPRG for its good trade-off between efficiency and security.

\subsection{Signature Scheme}
\label{sec:signature}
A signature scheme is used to prove the origin of a message. If a message is signed by Alice's secret key, then the message must have come from Alice. A signature scheme is a tuple of algorithm $(\mathcal{F}_{kg}, \mathcal{F}_{sig}, \mathcal{F}_{vrfy})$ such that
\begin{itemize}
    \item $\mathcal{F}_{kg}$ is a randomized algorithm that outputs a secret key $sk$ and a public key $pk$.
    \item $\mathcal{F}_{sig}$ is a randomized algorithm that receives the secret key $sk$ and a message $m$ and outputs a signature $\sigma$.
    \item $\mathcal{F}_{vrfy}$ is a deterministic algorithm that takes in the public key $pk$, the message $m$, and the signature $\sigma$, and returns one if $\sigma$ is a signature on $m$ and 0 otherwise.
\end{itemize}

The signature scheme achieves security against universal forgery under chosen message attack (UF-CMA). It means that someone without the secret key can not create a valid signature on a message he has not seen signed before. In other words, the probability of the adversary to construct a pair $(m^*, \sigma^*)$ without knowing the secret key $sk$ that $\sigma^*$ is a valid signature on $m^*$ and $m^*$ has never been seen before is negligible.

\section{Proposed Scheme}
\label{sec:high-level overview}

Similar to the SecAgg scheme \cite{bonawitz2017practical}, we divide participants into two classes: (i) a central server $\mathcal{S}$ that acts as a coordinator to aggregate inputs from $n$ clients $\mathcal{U}$, and (ii) each client $u_i \in \mathcal{U}$ holds a locally trained model $\boldsymbol{x}_i$. The goal of our scheme is that the server can compute the sum of clients' models as $z=\sum_{u_i \in \mathcal{U}}\boldsymbol{x}_i$ while keeping the privacy of $\boldsymbol{x}_i$ only to the client $u_i$. Besides, the scheme should be dropout-resilient as clients may drop out at any time.

\textbf{Threat model.} In many PPML applications such as FL, the participants can be individual users or competitive business entities and are required to comply with the data privacy regulations. Besides, some participants may cooperate in invading other's data privacy for their benefit. Due to these natures, we consider two threat models. In the first threat model, adversary participants are semi-honest that will not deviate from the protocol but try to infer the honest parties' information. In the second threat model, adversary participants can be active malicious and may collude and send fraudulent messages to others. We note that our proposed protocol is secure against semi-honest adversaries, and provides additional security against active malicious adversaries by adopting specific security protocols of which the number of adversaries has an upper bound (see Section \ref{sec:malicious-model} for the details). Our security definition is based on the Universal Composability (UC) framework, and we refer interested readers to \cite{canetti2001universally} for the details.

\subsection{Masking Models}
\label{sec:masking-models}

Since we aim to protect the clients' locally trained models from information leakage, the most intuitive method is to mask the model using a one-time pad before the models are submitted to the central server. Note that the mask is added to the model element-wisely to permute the model's distribution. Otherwise, since the range of the model's elements is known for a fixed ML model, adversaries can easily reconstruct the original model if all the model's elements are masked by the same random value. Let the client $u_i$'s model be $\boldsymbol{x}_i$, and its self-generated mask be $\boldsymbol{r}_i$, then client $u_i$'s masked model can be locally calculated by:

$$\boldsymbol{y}_i=\boldsymbol{x}_i+\boldsymbol{r}_i $$

After that, the client sends $\boldsymbol{y}_i$ to the server. Assume that the server already had the sum of $\boldsymbol{r}_i$ for all the clients as $\boldsymbol{R}=\sum_{u_i \in \mathcal{U}} \boldsymbol{r}_{i}$, the server can compute:

$$\boldsymbol{z}=\sum_{u_i \in \mathcal{U}} \boldsymbol{y}_{i}-\boldsymbol{R}=\sum_{u_i \in \mathcal{U}} \boldsymbol{y}_{i}-\sum_{u_i \in \mathcal{U}} \boldsymbol{r}_{i}=\sum_{u_i \in \mathcal{U}}\boldsymbol{x}_i $$

The naive method to get $\boldsymbol{R}$ is to let all the clients send their masks $\boldsymbol{r}_i$ to the server for aggregation. However, this leaks information about $\boldsymbol{x}_i$ as the server can easily compute $\boldsymbol{x}_i=\boldsymbol{y}_i-\boldsymbol{r}_i$. 
To deal with this issue, the intuitive idea is to let the clients send partial information of $\boldsymbol{R}$ as $\boldsymbol{R}^i$ rather than $\boldsymbol{r}_i$, which can be used to reconstruct $\boldsymbol{R}$ by computing $\boldsymbol{R} = \mathcal{F}(\boldsymbol{R}^1, \dots, \boldsymbol{R}^n)$ where $\mathcal{F}$ is the reconstruction function. Note that $\boldsymbol{R}^i$ and the process of computing $\boldsymbol{R}^i$ and $\boldsymbol{R}$ should not leak any information of $\boldsymbol{r}_i$. For example, if $\mathcal{F}$ is defined as summation and 
$$\boldsymbol{R}^{i}=\sum_{u_j \in \mathcal{U}} \boldsymbol{r}_{j}^i \text{, where } \boldsymbol{r}_{j}=\sum_{u_i \in \mathcal{U}}\boldsymbol{r}_{j}^i $$
Here, $\boldsymbol{r}_{j}^i$ is the partial information of $\boldsymbol{r}_{j}$ held by client $u_i$. Then, we can have: 
$$\boldsymbol{R}=\sum_{u_j \in \mathcal{U}}\boldsymbol{r}_{j}=\sum_{u_i \in \mathcal{U}}\boldsymbol{R}^{i}$$

Since the client $u_i$ knows only $\boldsymbol{r}_{j}^i$, and server $\mathcal{S}$ knows only $\boldsymbol{R}^{i}$, none of them can learn $\boldsymbol{r}_i$ nor $\boldsymbol{x}_i$. Note that the scheme mentioned above cannot tolerate dropout clients as the server can reconstruct $\boldsymbol{R}$ only when all the clients are connected, i.e., all the $\boldsymbol{R}^{i}$ are received by the server.

\subsection{Handling dropout clients}
\label{sec:handling-dropout-clients}

To handle the scenario where the clients may drop out during aggregation, the server should be able to learn the sum of $\boldsymbol{x}_i$ from only a fraction of the clients. In specific, if there are $n-t$ dropout clients, i.e., $t$ connected users, the server can still reconstruct $\boldsymbol{R}$ from $\boldsymbol{R}^i$ from connected clients. Such a scheme can be achieved by using a $t$-out-of-$n$ Shamir secret sharing scheme. 

Following the description in Section \ref{sec:Shamir-secret-sharing}, we assume a total order on clients. For each client $u_j$, let $\mathcal{F}_{gen}$ be the share generation function and $\mathcal{F}_{rec}$ be the secret reconstruction function for $(t,n)$ standard Shamir secret sharing that $\mathcal{F}_{gen}(\boldsymbol{r}_j) \rightarrow \left\{i,\boldsymbol{r}_j^{i}\right\}_{u_i \in \mathcal{U}}$ where $i \in \{1,2,\dots,n\}$. Here $\boldsymbol{r}_{j}^i$ is the share of $\boldsymbol{r}_{j}$ held by client $u_i$ such that $\mathcal{F}_{rec}(\left\{i,\boldsymbol{r}_j^i\right\}_{u_i \in \mathcal{V}})\rightarrow \boldsymbol{r}_j$ 
where $\left| \mathcal{V}\right| \geq t$. Then, each client $u_i$ can locally compute $\boldsymbol{R}^i$ as the share of $\boldsymbol{R}$ that
$$\boldsymbol{R}^{i}=\sum_{u_j \in \mathcal{U}} \boldsymbol{r}_{j}^{i} \bmod{P}$$
Therefore, relying on the additive homomorphic property of Shamir scheme (see Section \ref{sec:Shamir-secret-sharing}), if the number of received $\boldsymbol{R}^{i}$ is greater than the threshold $t$, the server can reconstruct $  \mathcal{F}_{rec}(\left\{i,\boldsymbol{R}^{i}\right\}_{{u_i} \in \mathcal{V}})\rightarrow\boldsymbol{R}$ where $\left| \mathcal{V}\right| \geq t$.

\subsection{More efficiently generating masks}
\label{sec:generating-masks}

We notice that in the SecAgg scheme \cite{bonawitz2017practical}, the communication cost can be further reduced by having the clients agree on common seeds or keys. In that case, the mask can be generated by a common seed using a pseudorandom generator (PRG). Thus each pair of clients only need to transmit one seed rather than the entire mask. A similar approach can also be applied to our scheme by having each client $u_i$ hold a seed $s_i$ such that the mask vector $\boldsymbol{r}_{i}=\text{PRG}(s_i)$. However, this trick does not work for our scheme as 
$$\text{PRG}(\sum_{u_i \in \mathcal{U}}s_i) \neq \sum_{u_i \in \mathcal{U}}\text{PRG}(s_i)$$
which means
$$\text{PRG}(\sum_{u_i \in \mathcal{U}}s_i) = \boldsymbol{R}  \neq \sum_{u_i \in \mathcal{U}}\boldsymbol{r}_i=\sum_{u_i \in \mathcal{U}}\text{PRG}(s_i)$$
In this case, since $\boldsymbol{y}_i=\boldsymbol{x}_i+\boldsymbol{r}_i$, the server cannot correctly cancel the mask to get the real aggregation results by computing $\boldsymbol{z}=\sum_{u_i \in \mathcal{U}} \boldsymbol{y}_{i}-\boldsymbol{R}$. Luckily, relying on the homomorphic pseudorandom generator (HPRG), this trick still works to reduce the communication costs. As illustrated in Section \ref{sec:homo-prg}, additive homomorphism holds for HPRG such that for any two seeds $s_a$, $s_b \in \mathcal{K}$, we have 
$$\text{HPRG}(s_{a} + s_{b})=\text{HPRG}(s_{a}) \cdot \text{HPRG}(s_{b})$$
Note that HPRG is constructed by using the structure introduced in Section \ref{sec:homo-prg} such that $$\text{HPRG}(k) \rightarrow [F(k,1),F(k,2),F(k,3),...,F(k,m)]$$ 
where $F(k,x)=H(x)^k$ and $m$ is the vector size of $\boldsymbol{x}_i$.
Therefore, in our scheme:
$$\text{HPRG}(\sum_{u_i \in \mathcal{U}}s_i) = \prod_{u_i \in \mathcal{U}}\text{HPRG}(s_i)$$
and component-wisely:
$$\boldsymbol{R} = \prod_{u_i \in \mathcal{U}}\boldsymbol{r}_i = \prod_{u_i \in \mathcal{U}}\text{HPRG}(s_i)$$

To keep the consistency between the operations regarding HPRG and our scheme, we adapt the masking as:
$$\boldsymbol{y}_i=g^{\boldsymbol{x}_i} \cdot \boldsymbol{r}_i$$
Here, $\boldsymbol{r}_i$ is randomly picked from the finite cyclic group $\mathbb{G}$ of which $q$ is the order and $g$ is the generator that all clients agree on. Note that since the distribution of $\boldsymbol{y}_i$ is identical with that of $\boldsymbol{r}_i$, the multiplicative mask $\boldsymbol{r}_{i}$ still guarantees the security. In other words, the mask $\boldsymbol{r}_{i}$ hides all information about $\boldsymbol{x}_i$. Besides, to keep the aggregation results meaningful, the value of $P$ for the Shamir scheme and $q$ for the finite cyclic group $\mathbb{G}$ need to satisfy that $P>q>n*\text{max}(x_i)$ in order to avoid the overflow where $n$ is the number of clients, i.e., the number of vectors to be aggregated. As a result, with the sum of $\boldsymbol{y}_{i}$ and the product of $\boldsymbol{r}_{i}$, we can compute

$$g^{\boldsymbol{z}}=\prod_{u_i \in \mathcal{U}} \boldsymbol{y}_{i}/\boldsymbol{R}=\prod_{u_i \in \mathcal{U}} \boldsymbol{y}_{i}/\prod_{u_i \in \mathcal{U}} \boldsymbol{r}_{i}=g^{\sum_{u_i \in \mathcal{U}}\boldsymbol{x}_i}$$

Because the input range of $x_i$ is fixed which is not very large, computing the discrete logarithms of $g^{\boldsymbol{z}} = g^{\sum_{u_i \in \mathcal{U}}\boldsymbol{x}_i}$ base $g$ to decrypt the sum $\sum_{u_i \in \mathcal{U}}\boldsymbol{x}_i$ is affordable. By using Pollard’s lambda method \cite{menezes2018handbook}, it requires roughly square root of time in the plaintext space. If the plaintext range of each element of $\boldsymbol{x}_i$ is in $\{0, 1, 2, \dots, n\alpha\}$, computing $\boldsymbol{x}_i$ requires roughly $\sqrt{n\alpha}$ time. Such consideration is practical in privacy-preserving aggregation for FL as pointed out in \cite{shi2011privacy}. For example, for 64-bit plaintext space (enough for many image classification tasks), computing a 1024-bit discrete logarithms with $n=500$ for a vector with 50K entries takes about 12.5 seconds on a 64-bit server. Note that in this section, we only describe the high-level overview and omit some details for simplicity, and thus we refer the readers to Section \ref{sec:our-protocol} for the full specification.

\subsection{Putting it all together}
\label{sec:putting-it-all-together}

We summarize the protocols regarding masking as follows:
\begin{enumerate}[1]
    \item Each client $u_i$ randomly selects a seed $s_i \in \mathcal{K}$.
    \item Using $(t, n)$ Shamir secret sharing scheme, each client $u$ computes $n$ shares of $s_i$ as $\mathcal{F}_{gen}(s_i)\rightarrow\left\{j,s_i^j\right\}_{u_j \in \mathcal{U}}$ and then sends $\{j,s_i^j\}$ to the client $u_j \in {\mathcal{U}}$.
    \item  Each client $u_i$ generates $\boldsymbol{r}_i$ based on $\text{HPRG}(s_i)$ and computes the masked update $y_i=g^{\boldsymbol{x}_i} \cdot \boldsymbol{r}_i$ and sends it to the server. Then the server receives the updates from all connected clients (denoted as $\mathcal{V}$), and computes $\prod_{u_i \in \mathcal{V}} \boldsymbol{y}_{i} = g^{\sum_{u_i\in\mathcal{V}}\boldsymbol{x}_i}\cdot \prod_{u_i \in \mathcal{V}}\boldsymbol{r}_i= g^{\sum_{u_i\in\mathcal{V}}\boldsymbol{x}_i}\cdot\boldsymbol{R}$. 
    \item Each client $u_j$ locally computes $s_R^j$ as the share of seed $s_R$ for generating $\boldsymbol{R}$ that $s_R^j=\sum_{u_i \in \mathcal{V}}s_i^j$, and sends $s_R^j$ to the server.
    \item The server reconstructs $s_R$ under $(t, n)$ Shamir secret sharing scheme that $ \mathcal{F}_{rec}(\left\{j,s_R^j\right\}_{u_j \in \mathcal{V}})\rightarrow s_R$, then calculates the mask $\boldsymbol{R}$ using HPRG such that $\boldsymbol{R}=\text{HPRG}(s_R)$.
\end{enumerate}

Note that the list of connected clients $\mathcal{V}$ is kept by the server. The clients need to fetch the list of connected clients before calculating $s_R^j$. We refer readers to Section \ref{sec:our-protocol} for the details. We can observe from the proposed protocol that to mask the clients' model, we rely on the additive property of the Shamir scheme rather than the pair-wise masking as used in SecAgg and SecAgg+. Thus, the communication overheads are significantly reduced. Furthermore, as long as a sufficient number of Shamir shares of the final mask have been collected by the server, the connected clients do not need to send additional Shamir shares to the server, compared to SecAgg-based schemes, which leads to a stronger dropout-resilience with respect to the efficiency.

\subsection{Proposed secure aggregation protocol}
\label{sec:our-protocol}

Our aggregation protocol involves one single server and a set of $n$ clients. Each client $u_i$ has an input vector $\boldsymbol{x}_i$ as its locally trained model or gradient. The vector $\boldsymbol{x}_i$ consists of $m$ elements from field $\mathbb{Z}_q$ for some $q$. The HPRG is under DDH assumption regarding the algorithm $\left(\mathbb{G}, g, q, H\right)$ for HPRG which samples a group $\mathbb{G}$ of order $q$ with generator $g$ and Hash function $H$ (see Section \ref{sec:homo-prg}). Similar to \cite{bonawitz2017practical,bell2020secure}, the communication channels between the server and clients are assumed to be synchronous, which means that message delivery time is bounded, e.g., if the server does not receive an uploaded model from a client within a time limit, it can assume that this client drops out of the system \cite{kairouz2019advances}. 
For simplicity, we assume a public-key crypto-system between any pair of clients, and abuse the notation and use $Enc(msg, pk)$ and $Dec(msg, sk)$ as the encryption and decryption on message $msg$ using the public key $pk$ and the secret key $sk$. The clients may drop out of the protocol at any time.
However, if the number of connected clients is greater than a threshold $t$, the server can still learn the correct output as the aggregation result. 

The detailed description of our protocol is given in Protocol \ref{protocol:our_protocol}.
Specifically, if a client $u_k$ has uploaded the masked model $\boldsymbol{y}_k$ in Step 2 and dropped out in the following steps, $\boldsymbol{y}_k$ still can be unmasked correctly and contribute to the aggregated model as the client has already shared its seed $s_k$ with other clients in Step 1. In other words, when more than $t$ clients send the shares of $\boldsymbol{R}$ to the server, i.e., $|\mathcal{U}_4|>t$, the server can compute an aggregation of all the models uploaded by clients in $U_2$ ($U_3$ under malicious threat model).
Note that if the client $u_k$ has shared the seed $s_k$ successfully but drops out before sending its masked model, i.e., $u_k \in \mathcal{U}_1 \setminus \mathcal{U}_2$, the mask vector $\boldsymbol{r}_k$ of the client $u_k$ will not be included in $\boldsymbol{R}$. As shown in Step 4, each connected client only computes the sum of shares of the seeds getting from the clients in $\mathcal{U}_2$. In this scenario, the share of client $u_k$'s seed $s_k$ is not added and thus $\boldsymbol{r}_k$ will not be reconstructed by the server.
Like other mechanisms based on Shamir secret sharing \cite{bonawitz2017practical, bell2020secure}, to reconstruct secrets, the proposed protocol is inevitably subject to delay since clients have to wait for the $\mathcal{U}_2$ list, which contains the identifier of the clients from whom the server has received the masked models. Nevertheless, as the protocol is assumed to run on a synchronous channel as mentioned earlier, this delay is bounded and the client is considered dropped out of the protocol if it does not upload the masked model within a predefined time limit.

We can observe that compared with the SecAgg scheme, since the most communication-intensive building block, i.e., the pair-wise DH key agreement protocol for seed agreement, is removed, our scheme achieves a much simpler structure.

\setcounter{figure}{0}
\begin{figure*}
\centering
\renewcommand\figurename{Protocol}
\begin{protocol}
\begin{adjustbox}{minipage=1.95\linewidth,fbox={\fboxrule} 12pt}
\vspace{-0.1cm}
\centerline{\bfseries{Protocol: HPRG based Secure Aggregation\par}}~
\vspace{-0.2cm}\\
\textbf{{Participants:}} A single central server $\mathcal{S}$ and a set of clients $\mathcal{U}$.\\
\textbf{{Private inputs:}} Each client $u_i$ has a locally trained model or gradient represented as a vector $\boldsymbol{x}_i$, a secret key for constructing authenticated channels $csk_i$,  \underline{and a secret key for signature $ssk_i$.} The server has a secret key for authenticated channels with clients $csk_s$  \underline{ and a secret key for signature $ssk_s$}.\\
\textbf{{Public inputs:}} The number of clients $n=\left|\mathcal{U}\right|$, the threshold $t<n$, the field $\mathbb{Z}_P$ for some $P$ for Shamir secret sharing scheme with function $\mathcal{F}_{gen}$ and $\mathcal{F}_{rec}$,
the algorithm $(\mathbb{G}, g, q)$ for HPRG which samples a finite cyclic group $\mathbb{G}$ of prime order $q$ with generator $g$, and the security parameter $\kappa$. Each client $u_i$'s and server's public key for constructing secure channels $cpk_i$, $cpk_s$,  \underline{and their public keys for signature $spk_i$, $spk_s$}. Note that $P>q>n*\text{max}(x_i)$.\\
\textbf{{Outputs:}} The aggregation result of the locally trained models from the set of connected clients $\mathcal{U}_2 \subseteq \mathcal{U}$: $\sum_{u_i \in \mathcal{U}_2}\boldsymbol{x}_i$  \underline{($\mathcal{U}_3$ under malicious threat model)}
\vspace{0.1cm}
\hrule
\hrule
\vspace{0.0cm}
\begin{enumerate}[*]
    \item \textbf{{Step 1 - Sharing seeds:}}\\
    Client $u_i$:
    \begin{enumerate}
        \item The client randomly picks $s_i$ from $\mathbb{Z}_q$ where $q$ is the order of the finite cyclic group $\mathbb{G}$ from which all clients agree on a generator $g$. Generates $(t, n)$ Shamir secret shares of $s_i \in \mathbb{Z}_P$, i.e., $\mathcal{F}_{gen}(s_i)\rightarrow (\left\{j,{s}_{i}^j\right\}_{u_j \in \mathcal{U}})$
        \item Sends $(Enc({s}_{i}^j, cpk_j),  \underline{\sigma_{i,j}^1})$ to client $u_j \in \mathcal{U}$,  \underline{where the signature $ \mathcal{F}_{sig}(Enc({s}_{i}^j, cpk_j), ssk_i)\rightarrow\sigma_{i,j}^1$}  (Denote $\mathcal{U}_1$ as the set of clients $u_i\in \mathcal{U}$ that at least $t$ shares of $s_i$ have been received by other clients).
        \item Receives $(Enc({s}_{j}^i, cpk_i),  \underline{\sigma_{j,i}^1})$ from all the clients $u_j \in \mathcal{U}$, and then computes ${s}_{j}^u=Dec(Enc({s}_{j}^u, cpk_i), csk_i)$.  \underline{If $\mathcal{F}_{vrfy}(Enc({s}_{j}^i, cpk_i), spk_j,  \sigma_{j,i}^1)=0$, aborts.}
    \end{enumerate}

    \item \textbf{{Step 2 - Collecting masked models:}}\\
    Client $u_i$:
    \begin{enumerate}
    \item Generates the mask vector $\boldsymbol{r}_i$ using HPRG with the seed $s_i$, i.e., $\boldsymbol{r}_i=\text{HPRG}(s_i)$.
    \item Computes the model masked by $\boldsymbol{r}_i$, i.e., $\boldsymbol{y}_i=g^{{\boldsymbol{x}_i}} \cdot \boldsymbol{r}_i$.
    \item Sends $\boldsymbol{y}_i$  \underline{with the signature $\mathcal{F}_{sig}(\boldsymbol{y}_i, ssk_i)\rightarrow\sigma^2_i$} to the server $\mathcal{S}$ .
    \end{enumerate}
    Server $\mathcal{S}$:
    \begin{enumerate}
    \item Receives all $\boldsymbol{y}_i$  \underline{with the signature $\sigma^2_i$} from clients (denote with $\mathcal{U}_2 \subseteq \mathcal{U}_1$ this set of clients).  \underline{If $\mathcal{F}_{vrfy}(\boldsymbol{y}_i,$}  \underline{ $spk_i, \sigma^2_i)=0$, remove client $u_i$ from $\mathcal{U}_2$.}
    \end{enumerate}

    \item \textbf{{ \underline{Step 3 - Checking consistency}}:}\\
    Client $u_i$:
    
    \begin{enumerate}
        \item  \underline{Fetches the list of $\mathcal{U}_2$ from the server $\mathcal{S}$ with the server's signature $\mathcal{F}_{sig}(\mathcal{U}_2, ssk_s)\rightarrow\sigma^3_s$. If $\mathcal{F}_{vrfy}(\mathcal{U}_2, $}  \underline{ $spk_s, \sigma^3_s)=0$, aborts.}
        \item  \underline{Sends $\mathcal{F}_{sig}(\mathcal{U}_2, ssk_i)\rightarrow\sigma^4_i$ to the server $\mathcal{S}$.}
    \end{enumerate}
    Server $\mathcal{S}$:
    \begin{enumerate}
    \item  \underline{Receives $\sigma^4_i$ from at least $t$ clients (Denote with $\mathcal{U}_3 \subseteq \mathcal{U}_2$ this set of clients) and forwards to the clients in $\mathcal{U}_3$.}
    \end{enumerate}
    
    \item \textbf{Step 4 - Unmasking:}\\
    Client $u$:
    \begin{enumerate}
        \item If the protocol does not consist of step 3 for consistency checking, fetches the list of $\mathcal{U}_2$ from the server $\mathcal{S}$.  \underline{ Otherwise, if $|\mathcal{U}_3|<t$ or for all $u_j \in \mathcal{U}_3$, $\mathcal{F}_{vrfy}(\mathcal{U}_2, spk_j, \sigma^4_j)=0$, aborts. }
        \item Computes the sum of shares of $s_j^i$ from all the clients $u_j \in \mathcal{U}_2$, i.e., $s_R^i=\sum_{u_j \in \mathcal{U}_2}s_j^i \bmod{P}$.  \underline{($\mathcal{U}_3$ under malicious threat model)}
        \item Sends $s_R^i$  \underline{with the signature $\mathcal{F}_{sig}(Enc(s_R^i, cpk_s), ssk_i)\rightarrow\sigma_j^5$} to the server $\mathcal{S}$.
    \end{enumerate}
    Server $\mathcal{S}$:
    \begin{enumerate}
    \item Receives $s_R^j$ from the clients (Denote with $\mathcal{U}_4$ this set of clients).  \underline{ If $\mathcal{F}_{vrfy}(Enc(s_R^j, cpk_s), spk_j, \sigma^5_j)=0$, }  \underline{ remove client $j$ from $\mathcal{U}_{4}$.} Proceed until $\left|\mathcal{U}_{4}\right|>t$. 
    \item Reconstructs the seed for unmasking, i.e., $\mathcal{F}_{rec}(\left\{j,s_R^j\right\}_{u_j \in \mathcal{U}_4}) \rightarrow s_R  \bmod{P}$.
    \item Generates the mask $\boldsymbol{R}$ using HPRG with the seed $s_R$, i.e., $\boldsymbol{R}=\text{HPRG}(s_R)$.
    \item Computes $g^{\boldsymbol{z}}=\prod_{u \in \mathcal{U}_2} \boldsymbol{y}_{i}/\boldsymbol{R}=\prod_{u_i \in \mathcal{U}_2} \boldsymbol{y}_{i}/\prod_{u_i \in \mathcal{U}_2} \boldsymbol{r}_{i}=g^{\sum_{u_i \in \mathcal{U}_2}\boldsymbol{x}_i}$.  \underline{($\mathcal{U}_3$ under malicious threat model)}
    \item Computes and outputs $\boldsymbol{z}=\log_{g}({g}^{\boldsymbol{z}})=\sum_{u_i \in \mathcal{U}_2}\boldsymbol{x}_i$.  \underline{($\mathcal{U}_3$ under malicious threat model)}
    \end{enumerate}

\end{enumerate}
\end{adjustbox}
\end{protocol}
\vspace{-0.5cm}
\caption{The description of HPRG based Secure Aggregation protocol for one FL round.  \underline{The underlined parts are only for active malicious threat model.}}
\label{protocol:our_protocol}
\end{figure*}

\section{Security Analysis}
\label{sec:sec-analysis}

In this section, we provide the security claims along with their proofs for the protocols proposed in Section~\ref{sec:our-protocol}. Recall that the involved participants are a single central server $\mathcal{S}$ and a set of clients $\mathcal{U}$ with their locally trained models $x_{\mathcal{U}}$. We consider that the central server is always online while the clients may abort, e.g., drop out, from the protocol at any point. We denote the connected clients in each step as $\mathcal{U}_i$ from the receivers' angle (refer to Protocol \ref{protocol:our_protocol}). The underlying cryptographic building blocks are instantiated with the security parameter $\kappa$. 

We assume a group of adversaries consisting of a subset of clients whose number is less than a threshold $t$, and with or without the central server. The security definition requires that any group of adversaries will learn nothing about the remaining clients' values. For example, if the threshold $t>2$, a group of adversaries consisting of client $u_1, u_2$ and the central server $\mathcal{S}$, they will not learn any information about $ u_3$'s locally trained model. More specifically, given any subset $\mathcal{C} \subseteq \mathcal{U}$ of the adversaries where $\left|\mathcal{C} \setminus \{\mathcal{S}\} \right |<t$, the resulting values of honest participants $\mathcal{U} \setminus \mathcal{C}$ should look uniformly random.

Let $\mathsf{REAL}_\mathcal{C}^{\mathcal{U}, t, \kappa}(x_{\mathcal{U}},\mathcal{U}_1, \mathcal{U}_2, \mathcal{U}_3)$ be a random variable representing the joint views of participants in $\mathcal{C}$ in real execution of our proposed protocol, and $\mathsf{SIM}_\mathcal{C}^{\mathcal{U}, t, \kappa}(x_{\mathcal{U}},\mathcal{U}_1, \mathcal{U}_2, \mathcal{U}_3)$ be another combined views of participants in $\mathcal{C}$ simulating the protocol that the inputs of honest participants are selected randomly and uniformly denoted with $x_\mathcal{C}$. Following above-mentioned idea, the distribution of $\mathsf{REAL}_\mathcal{C}^{\mathcal{U}, t, \kappa}(x_{\mathcal{U}},\mathcal{U}_1, \mathcal{U}_2,\mathcal{U}_3)$ and $\mathsf{SIM}_\mathcal{C}^{\mathcal{U}, t, \kappa}(x_{\mathcal{C}},\mathcal{U}_1, \mathcal{U}_2, \mathcal{U}_3)$ should be indistinguishable.

\subsection{Semi-honest Model}
We consider two cases: (i) a subset of clients are semi-honest colluding adversaries, and the server is honest (ii) the server is additionally semi-honest adversarial and colludes with a subset of clients. We provide the security claims along with the proofs in Theorem \ref{sec:semi-honest-1} and Theorem \ref{sec:semi-honest-2} respectively.

\begin{theorem}[Security against semi-honest clients, with honest server]
For all $\mathcal{U}, t, \kappa$ with $\left|\mathcal{C} \right|<t$, $x_{\mathcal{U}},\mathcal{U}_1, \mathcal{U}_2,\mathcal{U}_3$, and $\mathcal{C}$ where $\mathcal{C} \subseteq \mathcal{U}$ and $\mathcal{U}_3 \subseteq \mathcal{U}_2 \subseteq \mathcal{U}_1 \subseteq \mathcal{U}$, there exists a probabilistic polynomial-time (PPT) simulator $\mathsf{SIM}$ such that
$$\mathsf{SIM}_\mathcal{C}^{\mathcal{U}, t, \kappa}(x_{\mathcal{C}},\mathcal{U}_1, \mathcal{U}_2, \mathcal{U}_3) \equiv \mathsf{REAL}_\mathcal{C}^{\mathcal{U}, t, \kappa}(x_{\mathcal{U}},\mathcal{U}_1, \mathcal{U}_2,\mathcal{U}_3)$$
where $\equiv$ denotes that the distributions are identical.
\label{sec:semi-honest-1}
\end{theorem}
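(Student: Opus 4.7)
The plan is to build a PPT simulator $\mathsf{SIM}$ that, given only $(x_\mathcal{C}, \mathcal{U}_1, \mathcal{U}_2, \mathcal{U}_3)$, outputs a joint view of $\mathcal{C}$ identically distributed to $\mathsf{REAL}$. The central observation I will exploit is that when the server is honest, the corrupt clients never receive any $\boldsymbol{y}_j$, any aggregated mask $\boldsymbol{R}$, any reconstructed seed $s_R$, or any $s_R^j$ from $u_j \notin \mathcal{C}$: all of these quantities live exclusively at the server. Consequently, the only messages $\mathcal{C}$ ever sees from honest parties are the Shamir shares $\{s_j^i\}_{u_j \in \mathcal{U} \setminus \mathcal{C}}$ delivered over point-to-point secure channels in Step 1, plus the public broadcasts of $\mathcal{U}_1, \mathcal{U}_2$. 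In particular, the HPRG is never invoked in the simulation, and the whole argument reduces to the information-theoretic secrecy of the $(t,n)$ Shamir scheme below threshold, which is exactly what is needed for the identical-distribution conclusion rather than merely computational indistinguishability.

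First I would specify the simulator. For each $u_i \in \mathcal{C}$, $\mathsf{SIM}$ samples $\tilde s_i$ uniformly from $\mathbb{Z}_q$ and runs $\mathcal{F}_{gen}(\tilde s_i)$ honestly to obtain shares $\{\tilde s_i^j\}_{u_j \in \mathcal{U}}$; this reproduces every outgoing message of corrupt clients in Step 1. For each honest $u_j \in \mathcal{U} \setminus \mathcal{C}$ and each $u_i \in \mathcal{C}$, $\mathsf{SIM}$ draws $\tilde s_j^i$ uniformly and independently from $\mathbb{Z}_P$ and places it in $u_i$'s view as the Shamir share received from $u_j$ (wrapped inside a ciphertext under $cpk_i$ if the ciphertext itself must appear in the view). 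All ciphertexts on honest-to-honest channels are dropped since they are unobservable to $\mathcal{C}$. The lists $\mathcal{U}_1, \mathcal{U}_2$ are inserted verbatim as server broadcasts (and $\mathcal{U}_3$ is ignored in the semi-honest instantiation, where Step 3 is absent). Finally, each corrupt $u_i$ computes $\tilde s_R^i = \sum_{u_j \in \mathcal{U}_2} \tilde s_j^i \bmod P$ from the shares already placed in its view, exactly as in the real protocol.

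Next I would argue distributional equivalence by a short calculation. The corrupt clients' own seeds, outgoing shares, and $s_R^i$ values are generated by an identical procedure in both worlds, so their joint marginal distribution is already matched. It remains to compare the shares incoming from honest clients. For a fixed honest $u_j$ with true seed $s_j$, the shares $\{s_j^i\}_{u_i \in \mathcal{U}}$ are evaluations of a degree-$(t-1)$ polynomial with uniformly random coefficients $a_1, \dots, a_{t-1} \in \mathbb{Z}_P$ and constant term $s_j$. A standard Vandermonde argument shows that any $|\mathcal{C}| < t$ such evaluations are distributed uniformly over $\mathbb{Z}_P^{|\mathcal{C}|}$, independently of $s_j$; this is precisely the distribution of $\{\tilde s_j^i\}_{u_i \in \mathcal{C}}$ produced by $\mathsf{SIM}$. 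Independence across distinct honest clients follows from independence of the underlying polynomial randomness, and composition over all $u_j \in \mathcal{U} \setminus \mathcal{C}$ gives the full equality of joint distributions.

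The main obstacle I anticipate is careful bookkeeping rather than a deep cryptographic step. I need to verify that no side channel smuggles additional information into the view of $\mathcal{C}$: namely, (i) the secure point-to-point channels prevent $\mathcal{C}$ from observing ciphertexts on honest-to-honest links, so the ciphertexts themselves never need to be simulated consistently with honest plaintexts; (ii) the honest server, which receives $\boldsymbol{y}_j$ and $s_R^j$ from honest clients, never forwards these to $\mathcal{C}$; and (iii) the $\tilde s_R^i$ values on the corrupt side are a deterministic function of values already present in the view, so they introduce no fresh randomness that must be matched. Once these three points are dispatched, the joint distributions coincide exactly, yielding $\mathsf{SIM} \equiv \mathsf{REAL}$ as required.
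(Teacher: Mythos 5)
Your proposal is correct and takes essentially the same route as the paper's own proof: with an honest server, the only honest-party messages that ever reach $\mathcal{C}$ are a below-threshold collection of Shamir shares (plus the public lists $\mathcal{U}_1,\mathcal{U}_2$), so the simulator can replace them with uniform dummy values and the views coincide exactly. Your write-up is in fact more explicit than the paper's one-paragraph argument, which merely asserts independence and ``dummy values'' without spelling out the Shamir perfect-secrecy (Vandermonde) step that you supply.
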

\begin{proof}
Since the server is honest, the combined views of the participants in $\mathcal{C}$ are independent of that of the participants who are not in $\mathcal{C}$. This means by letting all semi-honest participants have their actual inputs and other participants have dummy inputs, the $\mathsf{SIM}$ can perfectly simulate the views of the participants in $\mathcal{C}$. As only the list of specific participants will be revealed to semi-honest participants, the simulator $\mathsf{SIM}$ can set the message uploaded from honest participants who are not in $\mathcal{C}$ as dummy values. Therefore, the simulated combined views of the participants in $\mathcal{C}$ are identical to that in $\mathsf{REAL}$.
\end{proof}

\begin{theorem}[Security against semi-honest adversaries, including the server]
For all $\mathcal{U}, t, \kappa$ with $\left|\mathcal{C}\setminus\{\mathcal{S}\} \right|<t$, $x_{\mathcal{U}},\mathcal{U}_1, \mathcal{U}_2,\mathcal{U}_3$, and $\mathcal{C}$ where $\mathcal{C} \subseteq \mathcal{U} \cup \{\mathcal{S}\}$ and $\mathcal{U}_3 \subseteq \mathcal{U}_2 \subseteq \mathcal{U}_1 \subseteq \mathcal{U}$, there exists a probabilistic polynomial-time (PPT) simulator $\mathsf{SIM}$ such that
$$\mathsf{SIM}_\mathcal{C}^{\mathcal{U}, t, \kappa}(x_{\mathcal{C}},\mathcal{U}_1, \mathcal{U}_2, \mathcal{U}_3) \equiv \mathsf{REAL}_\mathcal{C}^{\mathcal{U}, t, \kappa}(x_{\mathcal{U}},\mathcal{U}_1, \mathcal{U}_2,\mathcal{U}_3)$$
where $\equiv$ denotes that the distributions are identical.
\label{sec:semi-honest-2}
\end{theorem}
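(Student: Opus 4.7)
The plan is to construct a PPT simulator $\mathsf{SIM}$ that, given only $x_{\mathcal{C}}$ together with the aggregate $z = \sum_{u_i \in \mathcal{U}_2} x_i$ (which is the value the corrupt server legitimately obtains as its output), produces an adversarial view distributed identically to the REAL execution. First I would have $\mathsf{SIM}$ draw dummy inputs $\{x_i'\}_{u_i \in \mathcal{U}_2 \setminus \mathcal{C}}$ uniformly subject to the constraint $\sum x_i' = z - \sum_{u_i \in \mathcal{U}_2 \cap \mathcal{C}} x_i$, and pick arbitrary dummies for honest clients outside $\mathcal{U}_2$ (those contributions never enter the unmasking). Then $\mathsf{SIM}$ simply executes Protocol~\ref{protocol:our_protocol} honestly on behalf of every honest client, substituting $x_i'$ for $x_i$, drawing fresh uniform seeds $s_i' \in \mathbb{Z}_q$, and using fresh encryption randomness exactly as the protocol prescribes.

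Next I would verify identity of the joint adversarial view term by term. First, for each honest seed $s_i$, the adversary obtains at most $|\mathcal{C} \setminus \{\mathcal{S}\}| < t$ Shamir shares; by perfect secrecy of the $(t,n)$ scheme in Section~\ref{sec:Shamir-secret-sharing}, these are uniform in $\mathbb{Z}_P$ and information-theoretically independent of $s_i$, so the share distribution matches exactly between REAL and SIM. Second, the ciphertexts sent from one honest client to another (which the corrupt server merely forwards) use fresh encryption randomness in both executions; under the idealized authenticated-channel abstraction assumed in Section~\ref{sec:our-protocol} they carry no side information about the plaintext beyond what Shamir secrecy already guarantees. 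Third, each honest masked model $y_i = g^{x_i} \cdot \mathrm{HPRG}(s_i)$ is, with $H$ modeled as a random oracle, a uniform element of $\mathbb{G}^m$ independent of $x_i$; the identical distribution arises in SIM. Fourth, the reconstructed seed $s_R = \sum_{u_i \in \mathcal{U}_2} s_i$ is uniform in $\mathbb{Z}_q$ in both executions, and the product $\prod_{u_i \in \mathcal{U}_2} y_i = g^z \cdot \mathrm{HPRG}(s_R)$ is forced to the same deterministic value because the dummies satisfy the sum constraint.

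The main obstacle is the third item: obtaining strict distributional identity (not merely computational indistinguishability) of the HPRG outputs, as required by the $\equiv$ symbol in the theorem statement. To secure this I would invoke the random oracle instantiation of $H$ described in Section~\ref{sec:homo-prg}: each $H(\ell)$ is a uniform element of the prime-order group $\mathbb{G}$, hence a generator with probability $1 - 1/q$, so $H(\ell)^{s_i}$ for uniform $s_i \in \mathbb{Z}_q$ is genuinely uniformly distributed in $\mathbb{G}$. The joint distribution of the honest $y_i$'s conditioned on the public product $g^z \cdot \mathrm{HPRG}(s_R)$ depends only on the value of that product, which the dummy-input construction forces to coincide with the REAL execution. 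Combining identity of shares, ciphertexts, the reconstructed seed, and the product-conditional distribution of the individual $y_i$'s yields identity of the full joint view, establishing the required $\equiv$.
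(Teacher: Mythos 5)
Your overall architecture---a simulator that samples dummy inputs $\boldsymbol{w}_i$ for the honest clients in $\mathcal{U}_2\setminus\mathcal{C}$ subject to $\sum \boldsymbol{w}_i=\sum \boldsymbol{x}_i$, runs the protocol honestly on them, and appeals to the $(t,n)$ Shamir perfect secrecy for the at most $|\mathcal{C}\setminus\{\mathcal{S}\}|<t$ shares the adversary sees---is the same as the paper's, which organizes exactly these steps as hybrids $\mathsf{H}_1$ (random shares), $\mathsf{H}_2$--$\mathsf{H}_3$ (random masks, then random masked models), and $\mathsf{H}_4$ (dummy inputs with the sum constraint). The difference is in how the mask is handled, and that is where your argument has a genuine gap.

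Your third item claims that $\boldsymbol{y}_i=g^{\boldsymbol{x}_i}\cdot\mathrm{HPRG}(s_i)$ is a \emph{uniform} element of $\mathbb{G}^m$ when $H$ is a random oracle, so that the whole view is \emph{identically} distributed. This is false for $m>1$. The mask is $\bigl(H(1)^{s_i},H(2)^{s_i},\dots,H(m)^{s_i}\bigr)$: all $m$ components share the single exponent $s_i$, so the tuple is supported on a set of size $q$ inside $\mathbb{G}^m$, not on all of $\mathbb{G}^{m}$. An unbounded distinguisher (or anyone who can take discrete logs) detects the correlation immediately; the random oracle only makes each base $H(\ell)$ individually uniform, it does not decorrelate the components. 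Consequently the conditional distribution of the honest $\boldsymbol{y}_i$'s given the public product is \emph{not} determined solely by that product, and your claimed strict identity $\equiv$ does not follow. The correct move---the one the paper makes in hybrid $\mathsf{H}_2$---is to invoke the DDH-based pseudorandomness of the HPRG to replace each $\boldsymbol{r}_i$ by a fresh uniform vector, which yields only \emph{computational} indistinguishability of adjacent hybrids. (The theorem's use of $\equiv$ is the paper's own looseness; its proof likewise concludes with computational indistinguishability.) If you restate your conclusion as computational indistinguishability and insert a DDH/HPRG-security step in place of the uniformity claim, the rest of your term-by-term verification goes through and matches the paper's argument.
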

\begin{proof}
We use a standard hybrid argument to prove the theorem. We define a sequence of hybrid distributions $H_0, H_1, \dots$ to construct the simulator $\mathsf{SIM}$ by the subsequent modifications to the random variable $\mathsf{REAL}$. In other words, if any two subsequent hybrids are computationally indistinguishable, the distribution of simulator $\mathsf{SIM}$ as a whole is also identical to the real execution $\mathsf{REAL}$.
\begin{enumerate}[*]
    \item $\mathsf{H}_0$: In this hybrid, the distribution of the combined views of $\mathcal{C}$ of $\mathsf{SIM}$ is exactly the same as that of $\mathsf{REAL}$.
    \item $\mathsf{H}_1$: In this hybrid, for each client $u_i \in \mathcal{U}_1 \setminus \mathcal{C}$, we replace ${s}_{i}^j$, i.e., the share of simulated honest client $u_i$'s seed $s_i$ that sent to adversarial client $u_j$, with a randomly selected element in the corresponding field. Note that since the adversaries in $\mathcal{C}$ do not receive any additional shares of $s_i$ where $u_i \in \mathcal{U}_1 \setminus \mathcal{C}$, the combined view of adversaries has only $\left|\mathcal{C} \setminus \{\mathcal{S}\} \right| <t$  shares of each seed $s_i$. According to the security property of the Shamir secret sharing scheme, the adversaries learn nothing about the seed $s_i$. Therefore, the distribution of this hybrid is identical to the previous one.
    \item $\mathsf{H}_2$: In this hybrid, instead of computing the mask $\boldsymbol{r}_i$ using HPRG, the mask $\boldsymbol{r}_i$ of each simulated client $u_i$ is replaced with a randomly selected number in the appropriate length. Since in the previous hybrid, the seed $s_i$ is chosen uniformly and randomly by letting its shares be selected uniformly at random, and the adversaries' seeds are set to be 0, the output of HPRG does not depend on its seed. Therefore, the security of HPRG leveraging the Decisional Diffie-Hellman assumption guarantees the identical distribution of this hybrid to the previous one.
    \item $\mathsf{H}_3$: In this hybrid, we substitute each of the honest clients $u_i$'s masked model $\boldsymbol{y}_i$ with a uniformly random value. Since in the previous hybrid, $\boldsymbol{r}_i$ is chosen uniformly at random and is used as a one-time pad to mask $\boldsymbol{x}_i$, it is obvious that $\mathsf{SIM}$ can simulate $\mathsf{REAL}$ without knowing any information about $\boldsymbol{x}_i$, and thus this hybrid is identically distributed to the previous one.
    \item $\mathsf{H}_4$: In this hybrid, for the honest clients $u_i \in \mathcal{U}_2 \setminus \mathcal{C}$, instead of sending $g^{\boldsymbol{x}_i} \cdot \boldsymbol{r}_i \rightarrow\boldsymbol{y}_i$, we send $g^{\boldsymbol{w}_i} \cdot \boldsymbol{r}_i \rightarrow\boldsymbol{y}_i $ where $\boldsymbol{w}_i$ is uniformly sampled at random from $\mathbb{Z}_q$, subject to $$\sum_{u_i \in \mathcal{U}_2 \setminus \mathcal{C}} \boldsymbol{w}_i = \sum_{i \in \mathcal{U}_2 \setminus \mathcal{C}} \boldsymbol{x}_i \bmod{q}$$
    We can observe that the distribution of $g^{\boldsymbol{x}_i}\cdot \boldsymbol{r}_i $ is identical to that of $g^{\boldsymbol{w}_i} \cdot \boldsymbol{r}_i$ subject to the above equation.
\end{enumerate}
By defining such PPT simulator $\mathsf{SIM}$ as described in the last hybrid, the semi-honest adversaries' combined views of $\mathsf{SIM}$ are computationally indistinguishable from that of the real execution $\mathsf{REAL}$, and thus the proof is completed.
\end{proof}

\subsection{Active Malicious Model}
\label{sec:malicious-model}

Next, we discuss the security of the active malicious threat model. Note that in such a threat model, correctness cannot be guaranteed as active adversaries $\mathcal{C}$ can deviate from the protocol at any time by sending fraudulent messages, distorting the outputs, etc. In this case, only the privacy of honest clients' inputs is considered to be guaranteed. The difference between the semi-honest model and malicious model for our protocol can be summarized as follows:
\begin{enumerate}[-]
    \item The adversaries $\mathcal{C}$ can simulate a specific honest client $u$, and thus receive all the related information about $u$ to recover $u$'s inputs. This malicious behavior is so-called the Sybil attack.
    \item The malicious server $\mathcal{S}$ can actively send a different list of connected clients to the honest clients. For example, the server sends $\mathcal{U}$ to client $u$, and $\mathcal{V} \subseteq \mathcal{U}$ to client $v$. Then during the execution of the protocol, the information about $\mathcal{U} \setminus \mathcal{V}$ would be leaked, which can be used to recover the private inputs of honest clients' from $\mathcal{U} \setminus \mathcal{V}$.
    \item The malicious participants (server) is able to dynamically set any honest client to be dropped out from the protocol at any round of protocol execution, and thus the proof for the semi-honest threat model is no longer correct. The reason is that the simulator $\mathsf{SIM}$ knows only the sum of honest clients' inputs. If some honest clients are set to be dropped, the $\mathsf{SIM}$ cannot simulate the rest of honest clients' behaviors as it knows nothing about the private inputs of dropped clients.
\end{enumerate}

The first difference regarding the Sybil attack can be solved by using a standard signature scheme (see Section \ref{sec:signature}) which can be used to prove the origin of a message. In specific, a message signed by client $u$ must have come from $u$, and its origin can be verified by any participants. Thus the messages sent from honest clients cannot be modified or substituted by malicious adversaries.

However, even with authenticated channels, the malicious server can still give a different view of dropped clients (connected clients) to the honest clients for malicious purposes, as pointed out in the second difference. Therefore, we have to check the consistency between the list of connected clients sent to each client. In general, after receiving the list of the connected clients $\mathcal{V}$ from the server, each client $u_i$ in $\mathcal{V}$ generates a signature $\sigma_i$ on $\mathcal{V}$ and sends it to the server. Then the server forwards all the $\sigma_i$ to the clients in $\mathcal{V}$ to have them checking the consistency between $\mathcal{V}$ and $\{\sigma_i\}_{u_i \in \mathcal{V}}$, i.e., to verify if each $\sigma_i$ is actually the client $u_i$'s signature on $\mathcal{V}$ (see Step 3 in Protocol \ref{protocol:our_protocol}). As a result, the same view of connected clients lists to all the clients in $\mathcal{V}$ can be guaranteed. Note that the consistency check costs one communication round, which may cause more dropped clients, and thus needs to maintain an extra list of connected clients.

For the last difference, we take a similar approach in \cite{bonawitz2017practical} to adopt the proof to be performed in random oracle (RO). In such an RO model, the simulator $\mathsf{SIM}$ is able to send a query to an ideal functionality to learn the sum of a dynamically selected subset of honest clients. In other words, by reprogramming the RO such that the subset of honest clients is chosen dynamically, the combined view of adversaries in the real protocol execution $\mathsf{REAL(M_C)}$ is indistinguishable from that of the simulator $\mathsf{SIM}$. Here $M_C$ is a probabilistic polynomial-time algorithm that denotes the ``next message" function of participants in $\mathcal{C}$, which enables participants in $\mathcal{C}$ to dynamically choose (i) their inputs at any round of the protocol execution and (ii) the list of connected participants. 

We first give an ideal functionality $\mathcal{F}_{HSecAgg}$ in Func. 1 that describes how a fully trusted third party $\mathcal{T}$ would compute each participant's output from the inputs, i.e., calculate the sum of each client's model. In this case, our proposed protocol is secure if a simulator can simulate any information that the malicious adversaries can learn from the protocol in such a way that it is indistinguishable from what they can learn from the ideal functionality $\mathcal{F}_{HSecAgg}$, i.e., $\mathsf{SIM(\cdot)}\equiv \mathsf{REAL(\cdot)}$. To enable the comparison between our scheme and previous schemes, including SecAgg and SecAg+, we provide similar security claims under two settings, i.e., active malicious clients with and without honest server, along with the proofs in Theorem \ref{sec:malicious-1} and Theorem \ref{sec:malicious-2}, respectively.

\begin{function}
\caption{Functionality $\mathcal{F}_{HSecAgg}$. HPRG based Secure Aggregation Scheme Overview.}
\begin{algorithmic}[1]

\renewcommand{\algorithmicrequire}{\textbf{Participants:}}
\REQUIRE ~~\\
$\bullet$ A single central server $\mathcal{S}$ and a set of clients $\mathcal{U}$.\\
\renewcommand{\algorithmicrequire}{\textbf{Inputs:}}
\REQUIRE ~~\\
$\bullet$ Private gradient vector of each client $\boldsymbol{x}_i$;
Private seed of each client $s_i$;
Public number of clients $n=|\mathcal{U}|$;
Public $(t,n)$ Shamir secret sharing scheme.\\
\renewcommand{\algorithmicensure}{\textbf{Outputs:}}
\ENSURE ~~\\
$\bullet$ The server receives the aggregation result of the gradient vectors from a of clients $\mathcal{U}_3 \subseteq \mathcal{U}$.\\
~~\\

Trusted party $\mathcal{T}$ executes the following steps:

\STATE $\mathcal{T}$ receives the seeds from a set of clients $\mathcal{U}_1 \subseteq \mathcal{U}$. If $|\mathcal{U}_1|<t$, aborts.
\STATE $\mathcal{T}$ computes and sends the Shamir shares of received seeds and the list of $\mathcal{U}_1$ to the clients in $\mathcal{U}_1$.
\STATE $\mathcal{T}$ receives the masked models $\boldsymbol{y}_i$, which are constructed based on seeds' Shamir shares and $\boldsymbol{x}_i$, from a set of clients $\mathcal{U}_2 \subseteq \mathcal{U}_1$. If $|\mathcal{U}_2|<t$, aborts.
\STATE $\mathcal{T}$ sends the list of $\mathcal{U}_3 \subseteq \mathcal{U}_2$ to the clients in $\mathcal{U}_2$. If $|\mathcal{U}_3|<t$, aborts. (This step involves consistency check.)
\STATE $\mathcal{T}$ receives Shamir shares constructed based on seeds' Shamir shares for unmasking from a set of clients $\mathcal{U}_4 \subseteq \mathcal{U}_3$. If $|\mathcal{U}_4|<t$, aborts.
\STATE $\mathcal{T}$ calculates the sum of $\boldsymbol{x}_i$ for $\mathcal{U}_3$ based on the received $\boldsymbol{y}_i$ from $\mathcal{U}_3$ and the received Shamir shares from $\mathcal{U}_4$, then send it to the server.
\label{alg:ideal_func}

\end{algorithmic}
\end{function}

\begin{theorem}[Security against active malicious clients, with honest server]
For all $\mathcal{U}, t, \kappa$ with $\left|\mathcal{C} \right|<t$, $x_{\mathcal{U\setminus C}}$, and $\mathcal{C}$ with the algorithm $M_C$ where $\mathcal{C} \subseteq \mathcal{U}$, the protocol \ref{protocol:our_protocol} is a secure protocol for computing $\mathcal{F}_{HSecAgg}$, i.e., there exists a probabilistic polynomial-time (PPT) simulator $\mathsf{SIM}$ such that
$$\mathsf{SIM}_\mathcal{C}^{\mathcal{U}, t, \kappa}(M_C, x_{\mathcal{U\setminus C}}) \equiv \mathsf{REAL}_\mathcal{C}^{\mathcal{U}, t, \kappa}(M_C)$$
where $\equiv$ denotes that the distributions are identical.
\label{sec:malicious-1}
\end{theorem}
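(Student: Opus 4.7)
My plan is to adapt the hybrid argument used for Theorem~\ref{sec:semi-honest-2} to the active malicious client setting, leveraging two facts specific to this case: the server is honest, and $|\mathcal{C}|<t$ so the corrupt clients collectively hold fewer than $t$ Shamir shares of any honest client's seed. The key observation is that in Protocol~\ref{protocol:our_protocol}, essentially every message an honest client sends either goes directly to the server (the masked model $\boldsymbol{y}_i$ in Step~2 and the share $s_R^i$ in Step~4) or is a signature on a public list (Step~3); the only payload honest clients deliver to corrupt clients is the set of encrypted Shamir shares $\{s_i^j\}$ in Step~1. Consequently, the adversaries' view is already nearly independent of $x_{\mathcal{U}\setminus\mathcal{C}}$.

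First I would invoke UF-CMA security of the signature scheme to conclude that, except with negligible probability in $\kappa$, $M_C$ cannot produce a valid signature on behalf of any honest $u_i\in\mathcal{U}\setminus\mathcal{C}$. Conditioned on this event, every Sybil-style attempt is rejected by the honest server, and since the server faithfully executes the protocol, it delivers identical sets $\mathcal{U}_1,\mathcal{U}_2,\mathcal{U}_3$ to all recipients, so the Step~3 consistency check is automatically satisfied for every still-connected honest client. This lets me treat the $\mathcal{U}_k$ as well-defined quantities determined by $M_C$'s adaptive dropout choices, which the simulator can mirror in its queries to $\mathcal{F}_{HSecAgg}$ (Func.~1) round by round.

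Next I would define $\mathsf{SIM}$ through the hybrids: $\mathsf{H}_0$ is the real execution $\mathsf{REAL}_\mathcal{C}^{\mathcal{U},t,\kappa}(M_C)$; in $\mathsf{H}_1$, for every honest $u_i\in\mathcal{U}_1\setminus\mathcal{C}$, I would replace each share $s_i^j$ addressed to a corrupt $u_j\in\mathcal{C}$ by a uniformly random element of $\mathbb{Z}_P$ before encrypting it with $cpk_j$. Because $|\mathcal{C}|<t$, fewer than $t$ shares of $s_i$ leak to the adversary, so by the perfect privacy of the $(t,n)$ Shamir scheme the joint distribution of the delivered shares is unchanged. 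All other adversary-visible messages (encrypted shares among honest clients, lists $\mathcal{U}_k$ and their signatures) can be produced by $\mathsf{SIM}$ using the public parameters and the signing keys it controls on behalf of the honest parties, without referring to any $\boldsymbol{x}_i$; the honest server's final output, when required, is delegated to $\mathcal{F}_{HSecAgg}$ with the same $\mathcal{U}_3$, matching what the real server would have computed by Protocol~\ref{protocol:our_protocol}.

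The main obstacle I expect is dealing cleanly with the adaptive, round-by-round nature of $M_C$: the corrupt clients may choose to abort after sending some messages but before others, and the set $\mathcal{U}_3\subseteq\mathcal{U}_2$ of honest clients whose models ultimately contribute to the aggregate is not known until mid-execution. I would handle this by having $\mathsf{SIM}$ interact with $\mathcal{F}_{HSecAgg}$ on a rolling basis, submitting dummy values for the honest clients, observing which of them $M_C$ places in each $\mathcal{U}_k$, and only triggering the aggregation query once the functionality commits to $\mathcal{U}_3$. Combining the negligible signature-forgery event with the perfect-privacy hybrid $\mathsf{H}_1$ then yields $\mathsf{SIM}\equiv\mathsf{REAL}$ up to negligible statistical distance, which is the intended reading of $\equiv$ in this computational setting.
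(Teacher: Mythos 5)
Your proposal is correct and rests on the same core observation as the paper's own proof: with an honest server, the corrupt clients' view of the honest parties reduces to fewer than $t$ Shamir shares of each honest seed plus the public connectivity lists, so a simulator that gives honest clients dummy inputs suffices. The paper dispatches this in one paragraph by reducing to Theorem~\ref{sec:semi-honest-1}, whereas you spell out the same idea as an explicit hybrid (Shamir perfect privacy plus a UF-CMA step); your version is more rigorous, at the cost of concluding only negligible closeness where the paper asserts identical distributions.
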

\begin{proof}
The proof is identical to that for Theorem \ref{sec:semi-honest-1}. The reason is that even with $M_C$, which enables participants in $\mathcal{C}$ to choose their inputs at any round of the protocol execution, participants in $\mathcal{C}$ learn nothing about $x_{\mathcal{U\setminus C}}$ rather than the list of connected participants. Therefore, the simulator $\mathsf{SIM}$ can let all active malicious participants have their actual inputs and other participants have dummy inputs to perfectly simulate the views of the participants in $\mathcal{C}$. In this case, the simulated combined views of the participants in $\mathcal{C}$ are identical to that in $\mathsf{REAL}$.
\end{proof}

However, for the threat model including both active malicious clients and server, the proof is different from that for Theorem \ref{sec:semi-honest-2} as the sum of $\boldsymbol{x}_i$ is no longer available as the input to the simulator $\mathsf{SIM}$ (see $\mathsf{H}_4$ in the proof for Theorem \ref{sec:semi-honest-2}). Thus, we allow $\mathsf{SIM}$ to learn the sum by making a query to an ideal function $I$ in RO for dynamically chosen subset of honest participants denoted as $\mathcal{L}$ at any round of the protocol execution. More precisely, the ideal function $I$ takes $\mathcal{L},\mathcal{U},\mathcal{C}$ and a lower bound of the number of honest participants $\delta$ as inputs, and outputs $\sum_{u_i \in \mathcal{L}} \boldsymbol{x}_i$ if $\mathcal{L} \subseteq (\mathcal{U}-\mathcal{C}) $ and $|\mathcal{L}|\geq \delta$, and aborts otherwise.

\begin{theorem}[Security against active malicious clients, including the server]
For all $\mathcal{U}, t, \kappa$ with $\left|\mathcal{C}\setminus\{\mathcal{S}\} \right|<t$, $x_{\mathcal{U\setminus C}}$, $\mathcal{C}$ with the algorithm $M_C$ where $\mathcal{C} \subseteq \mathcal{U} \cup \{\mathcal{S}\}$, and $\delta=t-|\mathcal{C} \cap \mathcal{U}|$, the protocol \ref{protocol:our_protocol} is a secure protocol for computing $\mathcal{F}_{HSecAgg}$, i.e., there exists a probabilistic polynomial-time (PPT) simulator $\mathsf{SIM}$ such that
$$\mathsf{SIM}_\mathcal{C}^{\mathcal{U}, t, \kappa}(M_C, x_{\mathcal{U\setminus C}}) \equiv \mathsf{REAL}_\mathcal{C}^{\mathcal{U}, t, \kappa, I}(M_C)$$
where $\equiv$ denotes that the distributions are identical.
\end{theorem}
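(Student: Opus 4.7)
The plan is to prove the theorem by a hybrid argument extending that of Theorem~\ref{sec:semi-honest-2}, with three additional ingredients to handle active behavior: the UF-CMA unforgeability of the signature scheme (Section~\ref{sec:signature}) to defeat Sybil attacks, the Step~3 consistency check in Protocol~\ref{protocol:our_protocol} to fix a common view of the connected set, and a single query to the ideal functionality $I$ in the random oracle model to let $\mathsf{SIM}$ supply the honest parties' aggregate without knowing individual inputs.

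First, I would argue that one can assume without loss of generality that every signed message accepted by an honest party is the genuine one sent by the claimed sender. Formally, introduce an early hybrid that aborts whenever a signed message passes verification but was never produced by the honest party whose key signed it; UF-CMA security makes this abort event negligible, so this hybrid is computationally close to $\mathsf{REAL}$. This rules out Sybil attacks and guarantees that adversaries in $\mathcal{C}$ cannot inject fake honest-party messages.

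Second, I would use the consistency check to pin down a subset $\mathcal{L} \subseteq \mathcal{U} \setminus \mathcal{C}$ on which to query $I$. Because the protocol only proceeds to unmasking after all honest clients in $\mathcal{U}_3$ have verified signatures $\sigma^4_j$ of every other client in $\mathcal{U}_3$ on the same list $\mathcal{U}_2$, unforgeability forces a single $\mathcal{U}_2$ (and hence a single $\mathcal{U}_3$) to be agreed upon by the honest parties, or all of them abort. Setting $\mathcal{L} = \mathcal{U}_3 \cap (\mathcal{U} \setminus \mathcal{C})$, the simulator issues the query $I(\mathcal{L}, \mathcal{U}, \mathcal{C}, \delta)$, which is well-defined because $|\mathcal{L}| \geq t - |\mathcal{C} \cap \mathcal{U}| = \delta$.

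Third, I would run the hybrid chain of Theorem~\ref{sec:semi-honest-2} on top of these reductions: (i) replace the Shamir shares of each honest seed $s_i$ sent to adversarial clients with uniform field elements, indistinguishable by Shamir's threshold security since $|\mathcal{C} \setminus \{\mathcal{S}\}| < t$; (ii) replace each honest client's mask $\text{HPRG}(s_i)$ with a uniformly random vector in $\mathbb{G}^m$, indistinguishable under DDH in the random oracle model; and (iii) replace each $\boldsymbol{y}_i = g^{\boldsymbol{x}_i} \cdot \boldsymbol{r}_i$ for $u_i \in \mathcal{L}$ with $g^{\boldsymbol{w}_i} \cdot \boldsymbol{r}_i$ for uniform $\boldsymbol{w}_i$ subject to $\sum_{u_i \in \mathcal{L}} \boldsymbol{w}_i = \sum_{u_i \in \mathcal{L}} \boldsymbol{x}_i \pmod{q}$, where the right-hand side is exactly the value returned by $I$. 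Since each $\boldsymbol{r}_i$ is a uniform one-time pad in $\mathbb{G}^m$, this last step is distribution-preserving, and the final hybrid depends on $x_{\mathcal{U} \setminus \mathcal{C}}$ only through the single query to $I$, completing the construction of $\mathsf{SIM}$.

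The main obstacle will be the second step: rigorously arguing that the consistency check forces a unique $\mathcal{L}$ before $\mathsf{SIM}$ must commit to its query. The malicious server can drop clients adaptively and, in principle, can try to show different $\mathcal{U}_2$ to different honest clients; the proof must show that any such divergence either causes all honest parties to abort (so $\mathsf{SIM}$ simulates trivially) or is caught by signature verification of $\{\sigma^4_j\}_{u_j \in \mathcal{U}_3}$ in Step~3. The bound $\delta = t - |\mathcal{C} \cap \mathcal{U}|$ is precisely what prevents $I$ from being used to isolate an individual honest input, so care is needed to verify that on every adaptive branch of $M_C$ where $\mathsf{SIM}$ commits to $\mathcal{L}$, the assumption $|\mathcal{C} \cap \mathcal{U}| < t$ translates into $|\mathcal{L}| \geq \delta$ before the query is issued.
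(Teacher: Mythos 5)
Your proposal follows essentially the same route as the paper's proof: a hybrid argument layered on the semi-honest case, with signature unforgeability ruling out injected or forged messages, the Step~3 consistency check (backed by the threshold analysis of Section~\ref{bound_malicious_clients_and_server}) forcing a single agreed list of connected clients, and a single query to the ideal functionality $I$ in the random-oracle model replacing the honest inputs by values $\boldsymbol{w}_i$ constrained only by their sum. The paper merely spreads the signature-abort steps across several hybrids ($\mathsf{H}_2$, $\mathsf{H}_4$--$\mathsf{H}_6$, $\mathsf{H}_8$) where you collapse them into one, and the obstacle you flag about pinning down $\mathcal{L}$ before the query is exactly what the paper's hybrid $\mathsf{H}_7$ handles.
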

\label{sec:malicious-2}
\begin{proof}
Similar to the proof for Theorem \ref{sec:semi-honest-2}, we use a standard hybrid argument to prove the theorem. By defining a sequence of modifications to the random variable $\mathsf{REAL}$, we can construct the simulator $\mathsf{SIM}$ with a sequence of hybrid distributions. If any two subsequent hybrids are computationally indistinguishable, the distribution of simulator $\mathsf{SIM}$ as a whole is identical to the real execution $\mathsf{REAL}$.

\begin{enumerate}[*]
    \item $\mathsf{H}_0$: In this hybrid, the distribution of the combined views of $M_C$ of $\mathsf{SIM}$ is exactly the same as that of $\mathsf{REAL}$.
    \item $\mathsf{H}_1$: In this hybrid, we substitute all the shares ${s}_{i}^j$ for each client $u_i \in \mathcal{U}_1 \setminus \mathcal{C}$, with a randomly selected element in $\mathbb{Z}_P$. The security of Shamir secret sharing scheme guarantees identical distribution from the previous one.
    \item $\mathsf{H}_2$: In addition to the previous hybrid, the $\mathsf{SIM}$ aborts if $M_C$ provides any incorrect $\sigma_{i,j}^1$. Since this is equivalent to breaking the security of the signature scheme, this hybrid is identical to the previous one.
    \item $\mathsf{H}_3$: In this hybrid, the mask $\boldsymbol{r}_i$ of each simulated client $u_i$ is substituted with a randomly selected number in appropriate length, and the adversaries' masks are set to be 0. The security of HPRG leveraging the Decisional Diffie-Hellman assumption guarantees the identical distribution of this hybrid to the previous one.
    \item $\mathsf{H}_4$: In addition to the previous hybrid, the $\mathsf{SIM}$ aborts if $M_C$ provides any incorrect $\sigma^2_i$. Since this is equivalent to breaking the security of the signature scheme, this hybrid is identical to the previous one.
    \item $\mathsf{H}_5$: In addition to the previous hybrid, the $\mathsf{SIM}$ aborts if $M_C$ provides any incorrect $\sigma^3_s$. Because of the security of the signature scheme that guarantees that forgeries can happen only with negligible probability, this hybrid is indistinguishable from the previous one.
    \item $\mathsf{H}_6$: In addition to the previous hybrid, the $\mathsf{SIM}$ aborts if $M_C$ provides any incorrect $\sigma^4_i$. Because of the security of the signature scheme that guarantees that forgeries can happen only with negligible probability, this hybrid is indistinguishable from the previous one.
    \item $\mathsf{H}_7$: Denote the list of $\mathcal{U}_2$ fetched from the server as $\mathcal{Q}$. The $\mathsf{SIM}$ aborts if two different $\mathcal{Q}$ are signed by the clients. Since this amounts to breaking the security discussed in Section \ref{bound_malicious_clients_and_server}, and the server cannot forge signatures on behalf of the honest clients, this hybrid is indistinguishable from the previous one.
    \item $\mathsf{H}_8$: In addition to the previous hybrid, the $\mathsf{SIM}$ aborts if $M_C$ provides any incorrect $\sigma^4_i$. Because of the security of the signature scheme that guarantees that forgeries can happen only with negligible probability, this hybrid is indistinguishable from the previous one.
    \item $\mathsf{H}_9$: In this hybrid, the simulator $\mathsf{SIM}$ does not receive the inputs of the honest participants. Instead, it learns the required value $\boldsymbol{w}_i$ for the set $(\mathcal{Q} \setminus \mathcal{C})$ with respect to $\sum_{u_i \in \mathcal{Q} \setminus \mathcal{C}} \boldsymbol{w}_i = \sum_{u_i \in \mathcal{Q} \setminus \mathcal{C}} \boldsymbol{x}_i$ by making a query to the ideal function $I$. Note that according to the discussion in $\mathsf{H}_7$ and Section \ref{bound_malicious_clients_and_server}, the ideal function $I$ will not abort and does not modify the joint view of $\mathcal{C}$. Thus, this hybrid is indistinguishable from the previous one.
    
\end{enumerate}
By defining such PPT simulator $\mathsf{SIM}$ as described in the last hybrid, the active malicious adversaries' combined views of $\mathsf{SIM}$ is computationally indistinguishable from that of the real execution $\mathsf{REAL}$, and thus the proof is completed. This means the active malicious participants learn nothing except for the sum of $\boldsymbol{x}_i$ where $u_i \in \mathcal{L}$ and $|\mathcal{L}|\geq \delta$.
\end{proof}

Note that for active adversaries, the security of the consistency check can be guaranteed only when the threshold $t$ with respect to $\delta$ is set to be a proper value. Next, we discuss the minimum value of $t$ required for security in three threat models as follows.

\subsubsection{Malicious clients with honest server}
It is evident that with an honest server, each honest client views the correct list of connected clients, which is guaranteed by the protocol of consistency check. This means that for any $t>1$, the client learns nothing about the information of other's inputs.

\subsubsection{Honest clients with malicious server} Compared with the previous model, malicious server makes the situation a bit more complicated. The reason is that for some specific threshold $t$, the server is able to learn the information of the inputs, while passing the consistency check. For example, there are four clients $u_1, u_2, u_3, u_4$ with threshold $t=2$, the malicious server actively sends the connected client list $l_1=\{u_1, u_2, u_3\}$ to $u_1, u_2$, and $l_2=\{u_1, u_2, u_3, u_4\}$ to $u_3, u_4$, then following the consistency check protocol, $u_1$ and $u_2$ receive $l_1$ with their signatures on $l_1$, say $\sigma_1\{l_1\}$ and $\sigma_2\{l_1\}$. Since they can verify the origin of the messages and $|\{\sigma_1\{l_1\}, \sigma_2\{l_1\}\}| =2 \geq t$, $u_1$ and $u_2$ do not abort during the consistency check, and thus $u_3$ and $u_4$ are viewed as dropped clients from the angle of $u_1$ and $u_2$. Consequently, in our protocol, the server will receive the product of masks from connected clients $R_1=r_1r_2r_3$ where $r_i$ is the mask generated by $u_i$. Similarly, the server will also receive $R_2=r_1r_2r_3r_4$. As a result, $r_4$ can be easily obtained by computing $\frac{R_1}{R_2}$ which causes leakage of $u_4$'s input. To avoid such different views for $n$ clients, the threshold $t$ has to be set to be greater than $\frac{n}{2}$. In this case, since all the clients are honest, no one will make the signature twice, the number of signatures equals to $n$. This means that for two fraudulent list $l_1$ and $l_2$, if $l_1$ passes the consistency check with $|\{\sigma\{l_1\}\}|>t$, the list $l_2$ cannot pass the consistency check as $|\{\sigma\{l_2\}\}|=n-|\{\sigma\{l_1\}\}|<t$. Therefore, the security is guaranteed for $t \geq \lfloor  \frac{n}{2} \rfloor +1$.

\subsubsection{Malicious clients and server}
\label{bound_malicious_clients_and_server}
Different from the assumption in the previous model, for malicious clients, they are able to make the signature on the list $l$ any number of time. For example, there are six clients $u_1, u_2, u_3, u_4, u_5, u_6$ with the threshold $t=4$ where $u_1$ and $u_2$ are active malicious clients. The client $u_3$ and $u_4$ receive the list $l_1=\{u_1, u_2, u_3, u_4, u_5\}$ with the signatures $\sigma_1=\{\sigma_1\{l_1\},\sigma_2\{l_1\}, \sigma_3\{l_1\}, \sigma_4\{l_1\}\}$ while the client $u_5$ and $u_6$ receive the list $l_2=\{u_1, u_2, u_3, u_4, u_5, u_6\}$ with the signatures $\sigma_2=\{\sigma_1\{l_2\},$ $\sigma_2\{l_2\},$ $\sigma_5\{l_2\}, \sigma_6\{l_2\}\}$. Since the signatures are correctly made and $|\sigma_1| \geq t$, $|\sigma_2| \geq t$, no one will abort during the consistency check, and thus the input of client $u_6$ will be leaked. We can observe that client $u_1$ and $u_2$ have made their signatures on both $l_1$ and $l_2$. Therefore, even with $t \geq \lfloor  \frac{n}{2} \rfloor +1$, the privacy of clients' inputs can be compromised in the setting where both the server and clients are active malicious.

Now we proceed to discuss the lower bound of threshold $t$ in this model. Recall that we are considering the setting where they are $n$ clients in total including $n_c$ malicious clients with the threshold $t$. For two fraudulent lists $l_1$ and $l_2$, let the number of signatures on $l_1$ and $l_2$ from honest clients be $n_1$ and $n_2$ respectively, and the number of signatures from malicious clients be $n_c$. In order to have two fraudulent lists $l_1$ and $l_2$ passing the consistency check, the number of effective signatures on $l_1$ and $l_2$ should be not less than $t$ such that
$$t \leq n_1+n_c, t \leq n_2+n_c$$
while $n_1+n_2+n_c \leq n$,
we can observe that for $t \leq \frac{n+n_c}{2}$, it is possible to construct two different views to clients. Therefore, to guarantee the security, the threshold $t$ must be greater than $\frac{n+n_c}{2}$. Furthermore, as the number of adversaries $n_c$ cannot be greater than $\frac{n}{3}$ (refer to \cite{bonawitz2017practical} for the details), we can have the minimum threshold $t$ for the required security is $\lfloor  \frac{2n}{3} \rfloor +1$.

\section{Performance Analysis}
\label{sec:performance-analysis}

\begin{figure*}[htb]
\centering
\begin{minipage}[b]{0.48\linewidth}
\includegraphics[width=\linewidth]{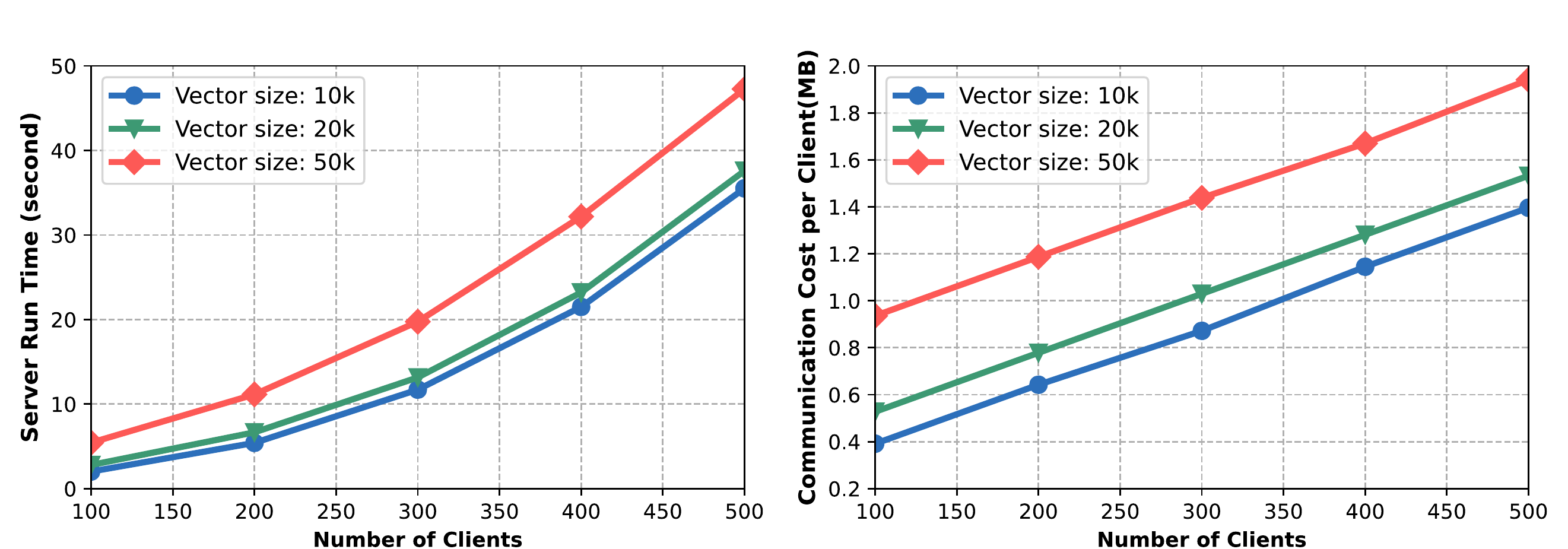}
\centering
\caption{Server's (total) runtime and client's communication cost \\as the number of clients increases. No dropout client.}
\label{fig:num_increase}
\end{minipage}
\hspace{0.2cm}
\begin{minipage}[b]{0.48\linewidth}
\centering
\includegraphics[width=\linewidth]{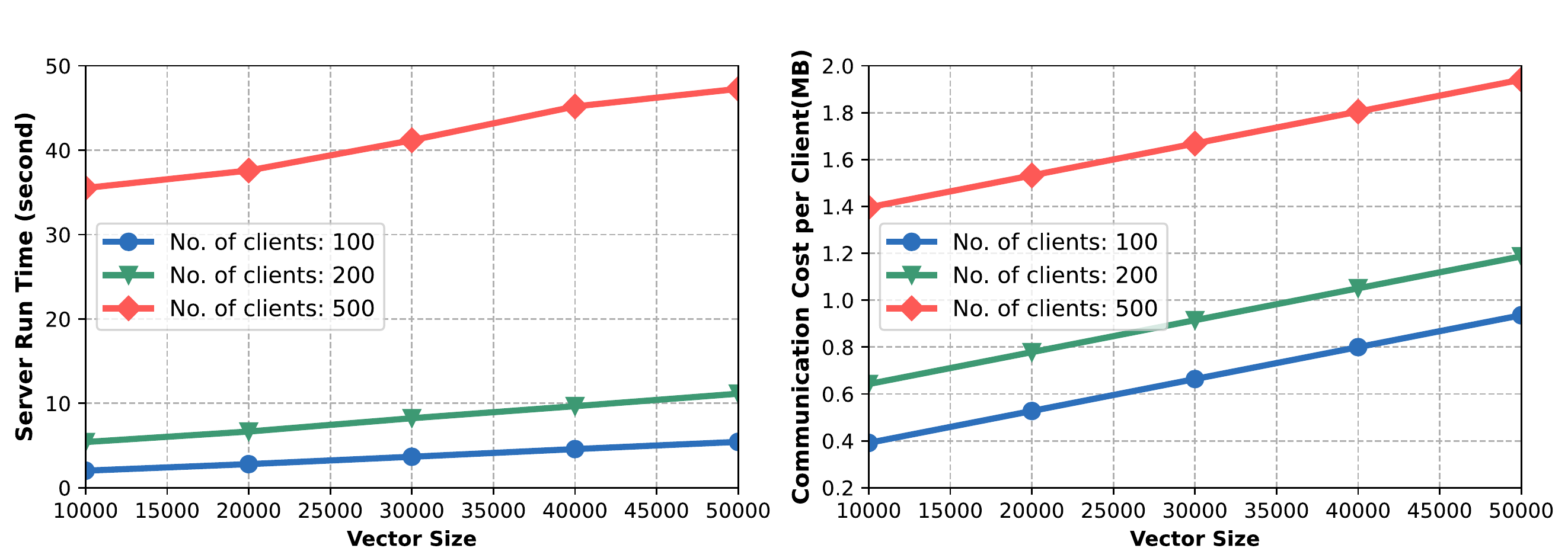}
\centering
\caption{Server's (total) runtime and client's communication cost as the vector size increases. No dropout client.}
\label{fig:vector_increase}
\end{minipage}
\end{figure*}

\begin{figure*}
\centering
\subfloat[Runtime in LAN setting.]{
\begin{minipage}[t]{1\linewidth}
\includegraphics[width=1\linewidth]{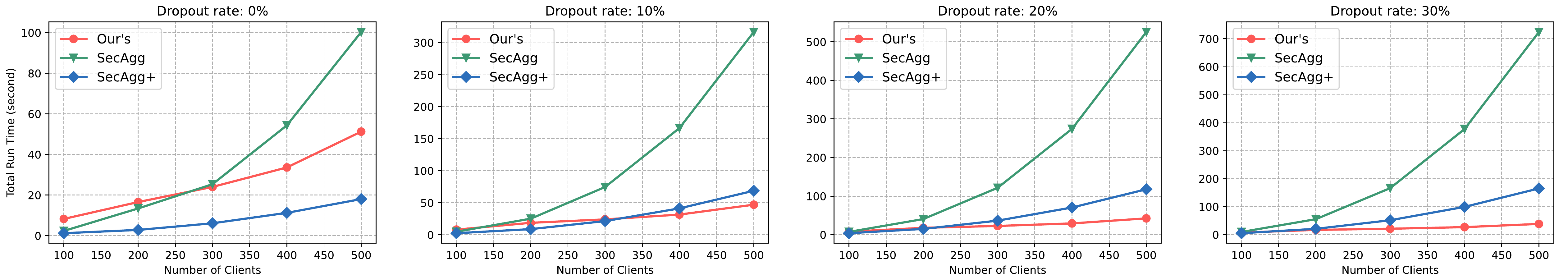}
\label{fig:dropout_runtime}
\end{minipage}%
}

\subfloat[Communication cost in LAN setting.]{
\begin{minipage}[t]{1\linewidth}
\includegraphics[width=1\linewidth]{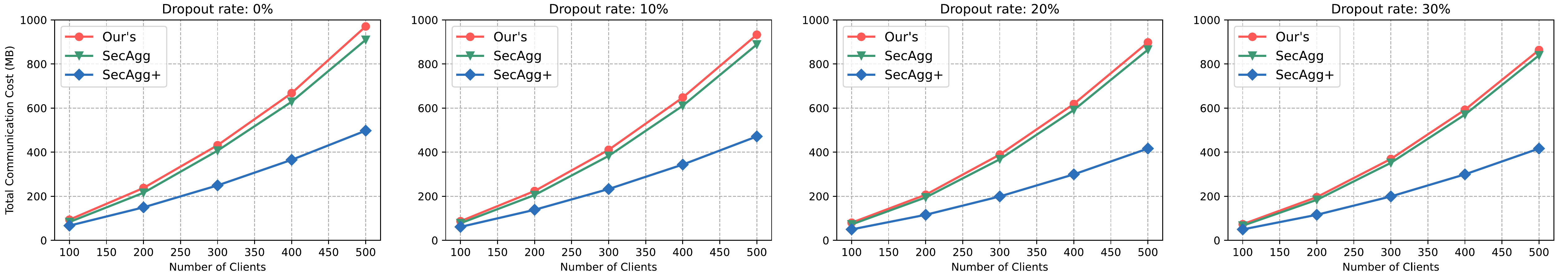}
\label{fig:dropout_communication_cost}
\end{minipage}%
}%

\subfloat[Runtime in WAN setting.]{
\begin{minipage}[t]{1\linewidth}
\includegraphics[width=1.0\linewidth]{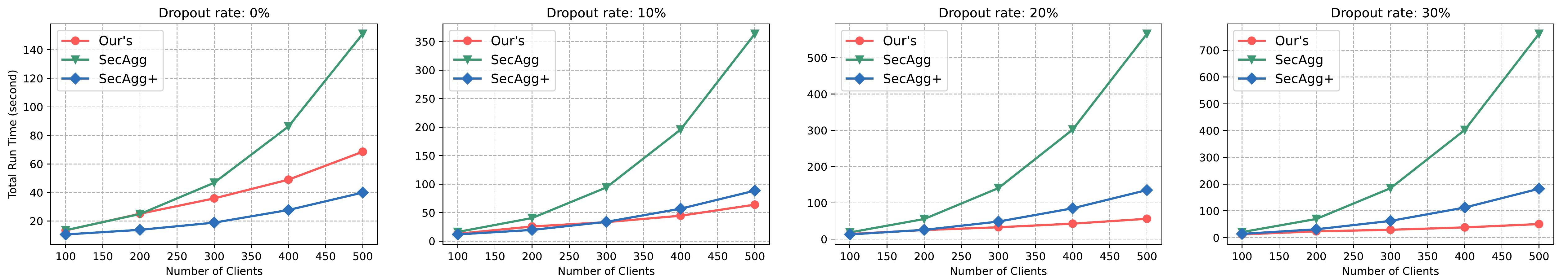}
\label{fig:dropout_runtime_wan}
\end{minipage}%
}

\subfloat[Communication cost in WAN setting.]{
\begin{minipage}[t]{1\linewidth}
\includegraphics[width=1.0\linewidth]{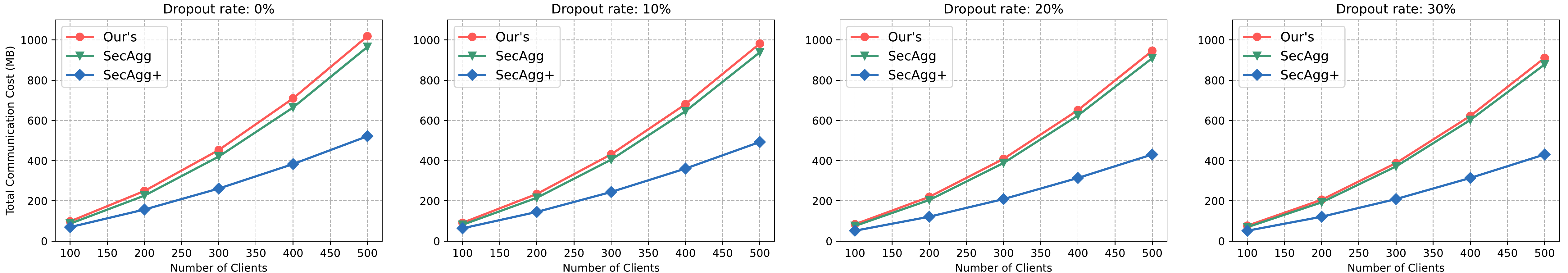}
\label{fig:dropout_communication_cost_wan}
\end{minipage}%
}%
\caption{Total runtime and communication cost with different dropout rates. The vector size is fixed to 50K.}
\label{fig:lan_wan_setting}
\end{figure*}

\begin{figure*}
\centering
\centering
\subfloat[LAN setting.]{
\begin{minipage}[t]{1\linewidth}
\includegraphics[width=1.0\linewidth]{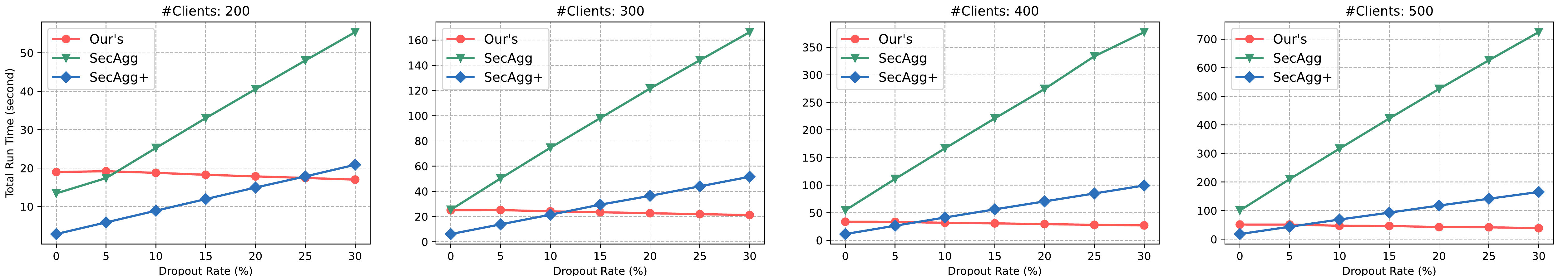}
\label{fig:drop-rate-subgraphs_lan}
\end{minipage}%
}

\subfloat[WAN setting.]{
\begin{minipage}[t]{1\linewidth}
\includegraphics[width=1.0\linewidth]{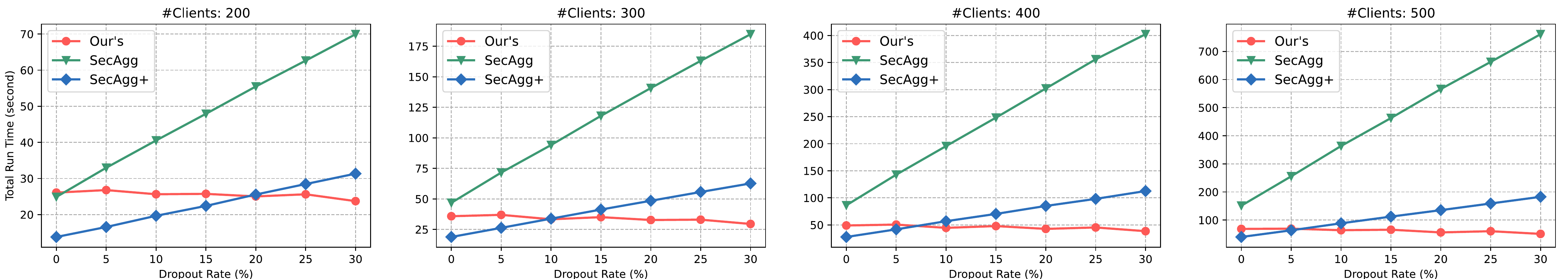}
\label{fig:drop-rate-subgraphs_wan}
\end{minipage}%
}

\caption{Total runtime as the dropout rate increases in different network settings. The vector size is fixed to 50K.}
\label{fig:drop-rate-subgraphs}
\end{figure*}

In this section, we first analyze the computation and communication cost of the client and server, respectively. Then, we implement a prototype to show the performance of our proposed protocol.

\subsection{Complexity analysis}

We summarize the analysis of complexity and dropout-resilience compared to other existing protocols in Table \ref{tab:comp_table}.

\textbf{Computation overheads:} Each client's computation cost mainly depends on (i) computing $n$ $(t, n)$ Shamir secret sharing with $\mathcal{O}(n^2)$ complexity and (ii) generating $m$ mask for the input vector using HPRG with $\mathcal{O}(m)$ complexity. Thus the total computation complexity per client is $\mathcal{O}(n^2+m)$. Since the server involves only one reconstruction from $n$ $(t, n)$ Shamir secret shares, the overhead is $\mathcal{O}(n)$. Note that we adopt similar method in SecAgg to precompute Lagrange basis polynomials (see Section \ref{sec:Shamir-secret-sharing} and \cite{bonawitz2017practical} for the details) for Shamir secret sharing, computational cost for one reconstruction results in $\mathcal{O}(n)$ complexity rather than $\mathcal{O}(n^2)$ in standard Shamir scheme.

\textbf{Communication overheads:} Communication cost of each client consists of sending $n$ encrypted shares and one masked model with $m$ elements, which causes $\mathcal{O}(m+n)$ complexity. The communication cost of the server is dominated by forwarding messages between every pair of clients with $\mathcal{O}(n^2)$ complexity and receiving masked models from the clients with $\mathcal{O}(mn)$ complexity, which is $\mathcal{O}(n^2+mn)$ complexity in total. Note that compared with SecAgg, our scheme does not involve pair-wise DH protocol, thus reducing $2n$ key exchange overhead for each client and $2n^2$ message forwarding for the server.

\textbf{Dropout resilience:} First, we note that the maximum number of dropout clients that our proposed scheme can tolerate is the same as that of the SecAgg scheme since both of them use a $(t,n)$ Shamir secret sharing scheme, which does not sacrifice any dropout-resilience compared to SecAgg+ scheme. Furthermore, by replacing communication-intensive Diffie-Hellman protocol adopted in SecAgg and SecAgg+ with HPRG and additive operation based on Shamir secret sharing scheme, the runtime of our proposed protocol decreases with the increase of the dropout rate, rather than the increasing runtime of previous works, which implies the stronger dropout-resilience of our scheme.

\subsection{Experiments}
\label{sec:experiments}
\begin{table*}[htb]
\centering
\caption{The client's and server's (total) runtime for different steps of the proposed protocol in LAN/WAN settings under semi-honest model. The vector size is fixed to 50K with 64 bits length.}
\begin{tabular}{ccccccc} 

\toprule[1pt]
& Num. clients & Dropout rate & Sharing seeds & Collecting Masked models & Unmasking & total runtime\\ 
\midrule
Client  & 500 & 0\% & 2.28s/7.76s & 0.78s/3.24s & 0.03s/0.84s & 3.09.s/11.84s      \\ 
Server  & 500 & 0\% & 2.61s/7.76s & 0.83s/3.24s & 44.45s/57.56s & 47.88s/68.56s     \\ 
Server  & 500 & 10\% & 2.47s/8.26s & 0.78s/3.25s & 38.96s/52.48s & 42.21s/63.99s     \\ 
Server  & 500 & 20\%  & 2.47s/6.63s & 0.78s/2.62s & 35.33s/46.88s & 38.59s/56.13s  \\     
Server  & 500 & 30\%  & 2.47s/6.16s & 0.79s/2.63s & 31.08s/42.11s & 34.34s/50.90s  \\ 
\midrule
Client  & 1000 & 0\% & 8.36s/16.77s & 0.78s/3.91s & 0.23s/1.85s & 9.37s/22.53s       \\ 
Server  & 1000 & 0\% & 8.81s/16.77s & 0.79s/3.91s & 145.34s/242.37s &  154.94s/263.05s      \\ 
Server  & 1000 & 10\% & 8.83s/14.15s & 0.78s/3.69s & 129.10s/224.01s &  138.71s/241.85s      \\ 
Server  & 1000 & 20\% & 8.87s/14.23s & 0.78s/3.57s & 108.17s/177.22s &  117.82s/195.02s      \\ 
Server  & 1000 & 30\% & 8.85s/17.31s & 0.83s/3.68s & 91.95s/147.71s &  101.63s/168.70s      \\ 

\bottomrule[1pt]
\end{tabular}
\label{tab:total_table}
\end{table*}

\begin{table*}[htb]
\centering
\caption{The client's and server's (total) runtime for different steps of the proposed protocol in LAN/WAN settings under active malicious model. The vector size is fixed to 50K with 64 bits length.}
\begin{tabular}{cccccccc} 

\toprule
       & \begin{tabular}[c]{@{}c@{}}Num. \\clients\end{tabular} & \begin{tabular}[c]{@{}c@{}}Dropout \\rate\end{tabular} & \begin{tabular}[c]{@{}c@{}}Sharing \\seeds\end{tabular} & \begin{tabular}[c]{@{}c@{}}Collecting Masked \\models\end{tabular} & \begin{tabular}[c]{@{}c@{}}Checking \\consistency\end{tabular} & Unmasking       & Total runtime     \\ 
\midrule
Client  & 500 & 0\% & 5.08s/9.54s & 0.84s/2.65s & 0.03s/0.87s & 0.23s/0.12s & 6.18s/13.18s       \\ 
Server  & 500 & 0\% & 5.26s/9.54s & 0.83s/2.65s & 0.03s/0.87s & 88.58s/131.14s & 94.70s/144.20s     \\ 
Server  & 500 & 10\% & 5.26s/11.55s & 0.85s/3.69s & 0.03s/0.87s & 73.23s/106.59s & 79.36s/122.70s     \\ 
Server  & 500 & 20\%  & 5.26s/11.46s & 0.84s/3.72s & 0.03s/0.87s & 60.34s/90.28s & 66.48s/106.33s  \\     
Server  & 500 & 30\%  & 5.26s/12.75s & 0.83s/4.34s & 0.02s/0.86s & 48.90s/75.07s & 55.03s/93.03s  \\ 
\midrule
Client  & 1000 & 0\% & 14.79s/20.19s & 0.79s/3.33s & 0.10s/0.95s & 5.75s/14.87s & 21.43s/39.34s       \\ 
Server  & 1000 & 0\% & 14.81s/20.19s & 0.79s/3.33s & 0.10s/0.95s & 406.40s/594.11s &  422.10s/619.30s      \\ 
Server  & 1000 & 10\% & 14.85s/22.53s & 0.79s/3.04s & 0.09s/0.94s & 311.71s/482.24s & 327.44s/508.75s      \\ 
Server  & 1000 & 20\% & 14.88s/23.79s & 0.79s/3.61s & 0.09s/0.93s & 238.46s/391.50s & 254.21s/419.39s      \\ 
Server  & 1000 & 30\% & 14.83s/23.65s & 0.79s/3.44s & 0.08s/0.92s & 174.12s/307.38s & 189.81s/335.39s      \\ 

\bottomrule[1pt]
\end{tabular}
\label{tab:total_table_malicious}
\end{table*}

Our prototype is tested on two c5.4xlarge AWS EC2 instances running Ubuntu 18.04 with 16 vCPUs and 32GB memory. The central server executes on one instance while the clients run parallel in the other instance. In the LAN setting, the instances are both hosted within the same region, i.e., Singapore (ap-southeast-1), with 3.72ms latency and 4.80Gbps bandwidth on average. In the WAN setting, the central server executes on an instance located in Singapore, and the clients execute on another instance located in northern Virginia, US (us-east-1), with 211.31ms latency and 4.18Gbps bandwidth on average. Our main scheme is implemented in Python using Gmpy2 \cite{horsen2016gmpy2}, Cryptography library \cite{cryptolib}, and several other standard libraries. The Pollard's lambda method for computing discrete logarithms is implemented in C++ using NTL library \cite{shoup2001ntl}. For cryptographic primitives, we adopt AES-GCM with 128-bit keys for authenticated encryption, standard $(t, n)$ Shamir secret sharing scheme to deal with dropped clients, and DDH based HPRG constructed with an SHA-256 hash to generate masks. We note that our proposed scheme is usable for all kinds of machine learning models, as we only focus on the part for secure aggregation. Thus, our scheme does not affect any performance of the ML model, and there is no deviation due to the usage of cryptographic techniques. Our implementation is available as open-source at \cite{ourcode}.

As noted earlier, our focus is on PPML systems that clients are resource-constrained mobile devices, hence it is necessary to investigate the communication cost from the client perspective. In addition, since the new FL round begins only after the server obtains the aggregation result in the last FL round, the total runtime from the server perspective may also affect user experience. Thus, we first evaluate the server's runtime and client's communication cost with different numbers of clients and vector sizes in a no-dropout setting, of which the experimental results are given in Fig. \ref{fig:num_increase} and Fig. \ref{fig:vector_increase}. We can observe that with 500 clients and 50K vector size comparable to LeNet \cite{lecun1998gradient}, only about 2MB communication cost is required for each client, and the whole aggregation can be done in about 1 minute, which implies the efficiency of our proposed scheme.

Furthermore, taking dropout clients into account, we investigate the total runtime and client's communication cost with different dropout rates in both LAN and WAN settings, of which the experimental results are given in Fig. \ref{fig:lan_wan_setting}. We can observe that our scheme involves a comparable total communication cost of the whole aggregation to that of SecAgg, as the same $(t,n)$ Shamir scheme is kept, but with less total runtime. Such observation keeps consistency in both LAN and WAN settings. Specifically, for large-scale systems, say with 500 clients, when the dropout rate is large, say greater than $10\%$, our scheme's efficiency performance outperforms SecAgg and SecAgg+. Fig. \ref{fig:drop-rate-subgraphs} also supports this point. The total runtime of our proposed scheme decreases with the increase of the dropout rate, rather than the increasing runtime of SecAgg and SecAgg+, which implies the stronger dropout-resilience of our scheme compared to previous works.

Besides, the client's and server's runtime for different steps of our proposed protocol under semi-honest setting and active malicious setting are given in Table. \ref{tab:total_table} and Table. \ref{tab:total_table_malicious} respectively, from which we can observe the additional cost to guarantee the security under the active malicious setting due to the use of a large number of signatures and verification techniques. Moreover, as mentioned in Section \ref{sec:generating-masks}, the server can accelerate the computation of discrete logarithms in the unmasking step of our protocol by using Pollard's lambda method \cite{menezes2018handbook}. In Table. \ref{tab:dlp}, we show the runtime of such computation for different sizes of the clients' gradients comparable to several ML models. We can observe from Table. \ref{tab:dlp} that the time required is competitive for small traditional ML models and simple neural networks such as LeNet \cite{lecun1998gradient}, but becomes impractical for large-scale neural networks such as ResNet18 \cite{he2016deep}. However, we note that for lightweight neural networks such as MobileNet V3 \cite{howard2017mobilenets} that can be deployed on resource-constrained mobile devices, the overheads of computing discrete logarithms can be still affordable if more powerful servers and multi-threading implementations are adopted.

\begin{table}
\centering
\caption{Runtime of computing discrete logarithms using Pollard's lambda method with different vector size. The order of involved finite cyclic group is set to be a 1024-bit prime.}
\label{tab:dlp}
\begin{tabular}{ccc} 
\toprule[1pt]
Vector Size & Model Type & Runtime    \\ \midrule
72 & Linear Regression\cite{candanedo2017data} &  30ms\\
7850 & Logistic Regression \cite{lecun1998gradient} &  1.9s\\
35K & SVM \cite{svm} & 8.6s \\
50K & LeNet \cite{lecun1998gradient} & 12.5s    \\
2.5M & MobileNet V3 small \cite{howard2017mobilenets} & 601.5s \\ 
5.5M & MobileNet V3 large \cite{howard2017mobilenets} & 1338.7s\\
11.7M & ResNet18 \cite{he2016deep} & 2039.3s\\
\bottomrule[1pt]
\end{tabular}
\end{table}

\subsection{Further discussions}

As noted in Section \ref{sec:experiments}, we emphasize that our focus is on designing a secure aggregation protocol to protect the privacy of clients' gradient vectors, hence the intermediate and final global models are revealed to all participants. Thus, our proposed scheme is still vulnerable to the membership inference attack \cite{shokri2017membership}. In this case, attackers can determine if a record is in clients' training datasets, given only some global models. Protecting the privacy of global models requires clients to train their ML models over encrypted global models. Existing solutions include HE-based schemes \cite{FroelicherTPSSB21,sinem21poseidon} and MPC-based schemes\cite{ChaudhariRS20,liu2020mpc}. However, those solutions may involve large overheads for large-scale ML models such as deep neural networks. To improve the efficiency while keeping the privacy of global models, sophisticated integrations of our proposed scheme with existing solutions are required.

Furthermore, our scheme can be integrated with the defense methods against so-called poisoning attack or backdoor attack \cite{bagdasaryan2020backdoor,nguyen2020poisoning} to further improve its security. These defense methods usually evaluate FL updates based on a well-designed metric to detect the poisoned updates. However, only methods with a proper selection of updates to be evaluated can be integrated with our scheme straightforwardly. For example, BaFFLe \cite{andreina2021baffle} avoids backdoor attacks by validating the new global model to be updated, which does not leak any information of clients' models to the server, and thus can be adopted for integration. In contrast, defense methods such as \cite{shen2016auror,blanchard2017machine,nguyen2021flguard,CaoF0G21} rely on the evaluation of clients' locally trained models, which means that the server must know clients' models. This directly leads to a breach of the security requirements of PPML that preserve the privacy of clients' models hence data. Therefore, those defense methods are hindered from deployments for the integration with our scheme and other secure aggregation schemes.

\section{Conclusions}
\label{sec:conclusion}

We proposed an efficient aggregation protocol to compute the sum of inputs from a set of participants while preserving their input privacy. Our protocol allows participants to drop out from the protocol during the execution and provides stronger dropout-resilience compared to previous works. Thus it is suitable to be applied to large-scale PPML scenarios. Additionally, the security of our protocol is guaranteed against both semi-honest and active malicious adversaries by setting proper system parameters. Besides, the simplicity of the proposed scheme makes it attractive both for implementation and for further improvements.

\section*{Acknowledgments}

We would like to thank three anonymous reviewers for their comments on the earlier versions of this paper. In addition, we thank the authors of SAFELearn \cite{fereidooni2021safelearn}, TurboAgg \cite{so2021turbo} and FastSecAgg \cite{kadhe2020fastsecagg} for their shared materials and helpful discussions.

\bibliographystyle{IEEEtran}
\bibliography{IEEEfull,template}
%

\newpage

\begin{IEEEbiography} [{\includegraphics[width=1in,height=1.25in,clip,keepaspectratio]{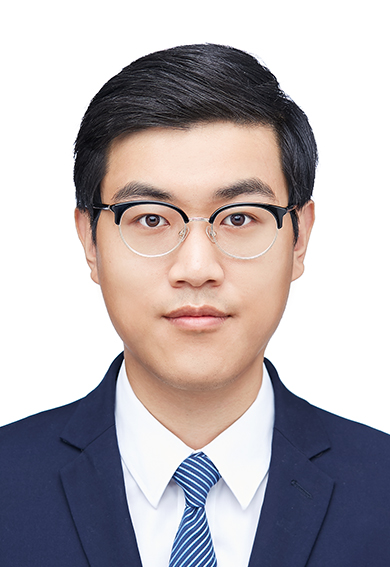}}]{Ziyao Liu}
received his B.E. degree from the school of Electronics Information Engineering, Zhengzhou University, Zhengzhou, China, in 2015, and the M.S. degree from Beijing Institute of Technology, Beijing, China, in 2018. He is currently working towards a Ph.D. degree in the School of Computer Science and Engineering, Nanyang Technological University, Singapore. His research interests include privacy-preserving machine learning, multi-party computation, and applied cryptography.
\end{IEEEbiography}

\begin{IEEEbiography} [{\includegraphics[width=1in,height=1.25in,clip,keepaspectratio]{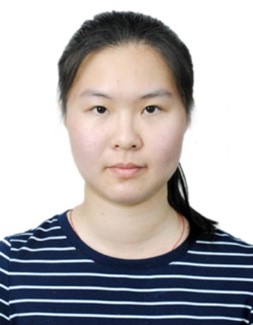}}]{Jiale Guo}
received her B.S. from the School of Mathematics, Shandon University, Jinan, China, in 2017. She is currently pursuing a Ph.D. degree in the School of Computer Science and Engineering, Nanyang Technological University, Singapore. Her research interests include privacy-preserving machine learning and Cybersecurity. 
\end{IEEEbiography}

\begin{IEEEbiography} [{\includegraphics[width=1in,height=1.25in,clip,keepaspectratio]{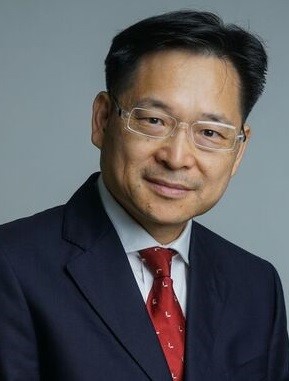}}]{Kwok-Yan Lam}
(Senior Member, IEEE) received his B.Sc. degree (1st Class Hons.) from University of London, in 1987, and Ph.D. degree from University of Cambridge, in 1990. He was a Visiting Scientist at the Isaac Newton Institute, Cambridge University, and a Visiting Professor at the European Institute for Systems Security. He has collaborated extensively with law-enforcement agencies, government regulators, telecommunication operators, and financial institutions in various aspects of Infocomm and Cyber Security in the region. From 2002 to 2010, he was a Professor with Tsinghua University, China. Since 1990, he has been a Faculty Member with the National University of Singapore and the University of London. He is currently a Full Professor with Nanyang Technological University, Singapore and the Director of the Strategic Centre for Research in Privacy-Preserving Technologies and Systems (SCRiPTS). From August 2020, Professor Lam is also on part-time secondment to the INTERPOL as a Consultant at Cyber and New Technology Innovation. In 1998, he received the Singapore Foundation Award from the Japanese Chamber of Commerce and Industry in recognition of his research and development achievement in information security in Singapore.
\end{IEEEbiography}

\begin{IEEEbiography} [{\includegraphics[width=1in,height=1.25in,clip,keepaspectratio]{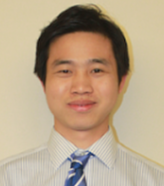}}]{Jun Zhao}
(S'10-M'15) is currently an Assistant Professor in the School of Computer Science and Engineering (SCSE) at Nanyang Technological University (NTU), Singapore. He received a Ph.D. degree in Electrical and Computer Engineering from Carnegie Mellon University (CMU), Pittsburgh, PA, USA, in May 2015, and a bachelor's degree in Information Engineering from Shanghai Jiao Tong University, China, in June 2010. One of his papers was a finalist for the best student paper award in IEEE International Symposium on Information Theory (ISIT) 2014. His research interests include A.I. and data science, security and privacy, control and learning in communications and networks.
\end{IEEEbiography}

\vfill

\end{document}